\newtheorem{theorem}{Theorem}
\newtheorem{lemma}[theorem]{Lemma}
\newtheorem{proposition}[theorem]{Proposition}
\newtheorem{problem}{Problem}
\newtheorem{claim}{Claim}
\newtheorem{definition}{Definition}
\newcommand{\red}[1]{{\color{red}#1}}
\newcommand{\blue}[1]{{\color{blue}#1}}
\newcommand{\abs}[1]{\left|#1\right|}
\DeclareMathOperator*{\argmin}{arg\,min}
\algnewcommand\algprocedure{\textbf{Procedure:}}
\algnewcommand\Procedurename{\item[\underline{\algprocedure}]}
\algnewcommand\algmain{\textbf{Main:}}
\algnewcommand\Main{\item[\underline{\algmain}]}
\algnewcommand\algorithmicinput{\textbf{Input:}}
\algnewcommand\Input{\item[\algorithmicinput]}
\algnewcommand\algorithmicoutput{\textbf{Output:}}
\algnewcommand\Output{\item[\algorithmicoutput]}
 \title{Robust and Approximately Stable Marriages under Partial Information}
\author{
	Vijay Menon\footnote{David R.\ Cheriton School of Computer Science, University of Waterloo.\ \email} \\
	\newline
	\and
	Kate Larson\footnotemark[1]
}
\date{}
\begin{document}

 \maketitle

\begin{abstract}
 We study the stable marriage problem in the partial information setting where the agents, although they have an underlying true strict linear order, are allowed to specify partial 
orders either because their true orders are unknown to them or they are unwilling to completely disclose the same. Specifically, we focus on the case where the agents are 
allowed to submit strict weak orders and we try to address the following questions from the perspective of a market-designer: \textit{i)} How can a designer generate 
matchings that are robust---in the sense that they are ``good'' with respect to the underlying unknown true orders? \textit{ii)} What is the trade-off between the amount of 
missing information and the ``quality'' of solution one can get? With the goal of resolving these questions through a simple and prior-free approach, we suggest looking at 
matchings that minimize the maximum number of blocking pairs with respect to all the possible underlying true orders as a measure of ``goodness'' or ``quality'', and 
subsequently 
provide results on finding such matchings.

In particular, we first restrict our attention to matchings that have to be stable with respect to at least one of the completions (i.e., weakly-stable matchings) and show 
that in this case arbitrarily filling-in the missing information and computing the resulting stable matching can give a non-trivial approximation factor (i.e., $o(n^2)$) 
for our problem in certain cases. We complement this result by showing that, even under severe restrictions on the preferences of the agents, the factor obtained is 
asymptotically tight in many cases. We then investigate a special case, where only agents on one side provide strict weak orders and all the missing 
information is at the bottom of their preference orders, and show that in this special case the negative result mentioned above can be circumvented in order to get a much 
better approximation factor; this result, too, is tight in many cases. Finally, we move away from the restriction on weakly-stable matchings and show a general hardness of 
approximation result and also discuss one possible approach that can lead us to a near-tight approximation bound.
\end{abstract}

\section{Introduction}
Two-sided matching markets have numerous applications, e.g., in matching students to dormitories ({i.e.}, Stable Roommates problem (SR) \citep{irv85}), residents to 
hospitals ({i.e.}, Hospital-Resident problem (HR) \citep{man08}) etc., and hence are ubiquitous in practice. Perhaps unsurprisingly, then, this line of research has 
received much 
attention, with plenty of work done on investigating numerous problems like SR and HR, and their many variations (we refer the reader to the excellent books by 
Gusfield and Irving \cite{gus89} 
and Manlove \cite{man13} for a survey on two-sided matching problems). The focus of this paper, too, is on one such problem---one that is perhaps the most widely-studied, 
but 
yet the simplest---called  the \textit{Stable Marriage problem} (SM), first introduced by Gale and Shapley \cite{gale62}. In SM we are given two disjoint sets (colloquially 
referred 
to as the set of men and women) and each agent in one set specifies a strict linear order over the agents in the other set, and the aim is to find a \textit{stable matching}, 
i.e., a matching where there is no man-woman pair such that each of them prefers the other over their partner in the matching. (Such a pair, if it exists, is called a 
\textit{blocking pair}.) 

While the assumption that the agents will be able to specify strict linear orders is not unreasonable in small markets, in general, as the markets get larger, it may not be 
feasible for an agent to determine a complete ordering over all the alternatives. Furthermore, there may arise situations where agents are simply unwilling to provide strict 
total orders due to, say, privacy concerns.
Thus, it is natural for a designer to allow agents the flexibility to specify partial orders, and so in this 
paper we assume that the agents submit strict weak orders\footnote{All our negative results naturally hold for the case when the agents are allowed to specify strict partial 
orders. As for our positive results, most of them can be extended for general partial orders, although the resulting bounds will be worse.} (i.e, strict partial orders where 
incomparability is transitive) that are consistent with their underlying true strict linear orders. Although the issue of partially specified preferences has received 
attention previously, we argue that certain aspects have not been addressed sufficiently. In particular, the common approach to the question of what 
constitutes a ``good'' matching in such a setting has been to either work with stable matchings that arise as a result of an arbitrary linear extension of the submitted 
partial orders (these are known as \textit{weakly-stable matchings}) or to look at something known as \textit{super-stable matchings}, which are matchings that are stable 
with 
respect to all the possible linear extensions of the submitted partial orders \citep{irv94,ras14}. In the case of the former, one key issue is that we often do not 
really know how ``good'' a particular weakly-stable matching is with the respect to the underlying true orders of the agents, and in the case of the latter they often do not 
exist. Furthermore, we believe that it is in the interest of the market-designer to understand how robust or ``good" a matching is with respect to the underlying 
true orders of the agents, for, if otherwise, issues relating to instability and market unravelling can arise since the matching that is output by a mechanism can be  
arbitrarily bad with respect to these true orders.
Hence, in this paper we propose to move away from the extremes of working with either arbitrary weakly-stable matchings or super-stable matchings, and to find a middle-ground 
when it comes to working with partial preference information. To this end, we aim to 
answer two questions from the perspective of a market-designer: \emph{i)} How should one handle partial information so as to be able to provide some guarantees with respect 
to the underlying true preference orders? \emph{ii)} What is the trade-off between the amount of missing information and the quality of a matching that one can achieve? We 
discuss our proposal in more detail in the following sections.

\subsection{How does one work with partial information?} 

When agents do not submit full preference orderings, there are several possible ways to cope with the missing information. For instance, one approach that immediately comes 
to 
mind is to assume that there exists some underlying distribution from which the agents' true preferences are drawn, and then use this information to find a ``good'' 
matching---which is, say, the one with the least number of blocking pairs in expectation. However, the success of such an approach crucially depends on having 
access to information about the underlying preference distributions which may not always be available. Therefore, in this paper we make no assumptions on the underlying 
preference distributions and instead adopt a prior-free and absolute-worst-case approach where we assume that any of the linear extensions of the given strict partial orders 
can be the underlying true order, and we aim to provide solutions that \emph{perform well} with respect to all of them. We note that similar worst-case 
approaches have been looked at previously, for instance, by Chiesa et al. \cite{chi12} in the context of auctions. 

The objective we concern ourselves with here is that of minimizing the number of blocking pairs, which is well-defined and has been considered previously in the context of 
matching problems (for instance, see \citep{abr05,biro10}). In particular, for a given instance $\mathcal{I}$ our aim is to return a matching $\mathcal{M}_{opt}$ that has the 
best worst case---i.e., a matching that has the minimum maximum `regret' after one realises the true underlying preference orders. (We refer to $\mathcal{M}_{opt}$ as the 
minimax optimal solution.) More precisely, let $\mathcal{I} = (p_U, p_W)$ denote an instance, where $p_U = \{p_{u_1}, \cdots, p_{u_n}\}$, $p_W = \{p_{w_1}, \cdots, 
p_{w_n}\}$, 
$U = \{u_i\}_{i\in\{1,2,\cdots,n\}}$ and $W=\{w_i\}_{i\in\{1,2,\cdots,n\}}$ are the set of men and women respectively, and $p_i$ is the strict partial order submitted by 
agent $i$. Additionally, let $C(p_i)$ denote the set of linear extensions of $p_i$, $C$ be the Cartesian product of the $C(p_i)$s, i.e., $C = \bigtimes_{i \in U \cup 
W} C(p_{i})$, $bp(\mathcal{M}, c)$ denote the set of blocking pairs that are associated with the matching $\mathcal{M}$ according to some linear extension $c \in C$, and $S$ 
denote the set of all possible matchings. Then the matching $\mathcal{M}_{opt}$ that we are interested in is defined as $\mathcal{M}_{opt} = \argmin_{\mathcal{M} \in S} 
\max_{c \in C} \abs{bp(\mathcal{M}, c)}$.

While we are aware of just one work by Drummond and Boutilier \cite{dru13} who consider the minimax regret approach in the context of stable matchings (they consider it 
mainly in the context of 
preference elicitation; see Section~\ref{sec:relatedwork} for more details), the approach, in general, is perhaps reminiscent, for 
instance, of the works of Hyafil and Boutilier \cite{hya04} and Lu and Boutilier \cite{lu11} who looked at the minimax regret solution criterion in the context of mechanism 
design for games with 
type uncertainty and preference elicitation in voting protocols, respectively. 

\textbf{Remark:} In the usual definition of a minimax regret solution, there is a second term which measures the `regret' as a result of choosing a particular solution. That 
is, in the definition above, it would usually be $\mathcal{M}_{opt} = \argmin_{\mathcal{M} \in S} \max_{c \in C} \abs{bp(\mathcal{M}, c)} - \abs{bp(\mathcal{M}_{c}, c)}$, 
where $\mathcal{M}_{c}$ is the optimal matching (with respect to the objective function $\abs{bp()}$) for the linear extension $c$. We do not include this in the definition 
above because $\abs{bp(\mathcal{M}_{c}, c)} = 0$ as every instance of the marriage problem with linear orders has a stable solution (which by definition has zero blocking 
pairs).  Additionally, the literature on stable matchings uses the term ``regret'' to denote the maximum cost associated with a stable matching, where the cost of a matching 
for an agent is the rank of its partner in the matching and the maximum is taken over all the agents (for instance, see \citep{man02}). However, here the term regret is used 
in the context of the minimax regret solution criterion.  

\subsection{How does one measure the amount of missing information?} \label{sec:missinfo}
For the purposes of understanding the trade-off between the amount of missing information and the ``quality'' of solution one can achieve, we need a way to measure the amount 
of missing information in a given instance. There are many possible ways to do this, however in this paper we adopt the following. For a given instance 
$\mathcal{I}$, the amount of missing information, $\delta$, is the fraction of pairwise comparisons one cannot infer from the given strict partial orders. That is, we know 
that if 
every agent submits a strict linear order over $n$ alternatives, then we can infer $\binom{n}{2}$ comparisons from it. Now, instead, if an agent $i$ submits a strict partial 
order $p_i$, 
then we denote by $\delta_i$ the fraction of these $\binom{n}{2}$ comparisons one cannot infer from $p_i$ (this is the ``missing information'' in $p_i$). Our $\delta$ here is 
equal to $\frac{1}{2n}\sum_{i\in U \cup W} \delta_i$. Although, given a strict partial order $p_i$, it is straightforward to calculate $\delta_i$, we will nevertheless assume 
throughout that $\delta$ is part of the input. Hence, our definition of an instance will be modified the following way to include the parameter for missing information: 
$\mathcal{I} = (\delta, p_U, p_W)$. 

\textbf{Remark:} $\delta = 0$ denotes the case when all the preferences are strict linear orders. Also, for an instance with $n$ agents on each side, the 
least value of $\delta$ when the amount of missing information is non-zero is $\frac{1}{2n}\frac{1}{\binom{n}{2}}$ (this happens in the case where there is only one agent 
with 
just one pairwise comparison missing). However, despite this, in the interest of readability, we sometimes just write statements of the form ``for all $\delta > 0$''. Such 
statements need to be understood as being true for only realizable or valid values of $\delta$ that are greater than zero.   

\subsection{Our Contributions}

The focus of our work is on computing the minimax optimal matching, i.e., a matching that, when given an instance 
$\mathcal{I}$, minimizes the maximum number of blocking pairs with respect to all the possible linear extensions (see Section~\ref{sec:prob} for a formal definition of the 
problem). Towards this end, we make the following contributions: 

\begin{itemize}
\item We formally define the problem and show that, interestingly, the problem under consideration is equivalent to the problem of finding a 
matching that has the minimum number of \textit{super-blocking pairs} (i.e., man-woman pairs where each of them weakly-prefers the other over their current partners).

\item While an optimal answer to our question might involve matchings that have man-woman pairs such that each of them strictly prefers the other over their partners, 
we start by focusing our investigation on matchings that do not have such pairs. Given the fact that any matching with no such pairs are weakly-stable, through this setting 
we address the question ``given an instance, can we find a weakly-stable matching that performs well, in terms of minimizing the number of blocking pairs, 
with respect to all the linear extensions of the given strict partial orders?'' We show that by arbitrarily filling-in the missing information and computing the resulting 
stable 
matching, one can obtain a non-trivial approximation factor (i.e., one that is ${o}(n^2)$) for our problem for many values of $\delta$.  We complement this result by showing 
that, even under severe restrictions on the preferences of the agents, the factor obtained is asymptotically tight in many cases. 

\item By assuming a special structure on the agents' preferences---one where strict weak orders are specified by just agents on one side and all the missing information is at 
the bottom of their preference orders---we show that one can obtain a $\mathcal{O}(n)$-approximation algorithm for our problem. 
The proof of the same is via finding a 2-approximation for another problem (see Problem~\ref{prob:min-del}) that might be of independent interest. 

\item In Section~\ref{sec:beyondws} we remove the restriction to weakly-stable matchings and show a general hardness of approximation result for our problem. Following 
this, we discuss one possible approach that can lead to a near-tight approximation guarantee for the same. 
\end{itemize}

\subsection{Related Work} \label{sec:relatedwork}
There has recently been a number of papers that have looked at problems relating to missing preference information or uncertainty in preferences in the context of matching. 

Drummond and Boutilier \cite{dru13} used the minimax regret solution criterion in order to drive preference elicitation strategies for matching problems. While they discussed 
computing robust matchings subject to a minimax regret solution criteria, their focus was on providing an NP-completeness result and heuristic preference elicitation 
strategies for refining the missing information. In contrast,  in addition to focusing on understanding the exact trade-offs between the amount of missing information and the 
solution ``quality'', we concern ourselves with arriving at approximation algorithms for computing such robust matchings.

Rastegari et al. \cite{ras14} studied a partial information setting in labour markets. However, again, the focus of this paper was different than ours. They looked at 
pervasive-employer-optimal matchings, which are matchings that are employer-optimal (see \citep{ras14} for the definitions) with respect to all the underlying linear 
extensions. In addition, they also discussed how to identify, in polynomial time, if a matching is employer-optimal with respect to some linear extension. 

Recent work by Aziz et al. \cite{aziz16} looked at the stable matching problem in settings where there is uncertainty 
about the preferences of the agents. They considered three different models of uncertainty and primarily studied the complexity of computing the 
stability probability of a given matching and the question of finding a matching that will have the highest probability of being stable. In contrast to their work, in this 
paper we do not make any underlying distributional assumptions about the preferences of the agents and instead take an absolute worst-case approach, which in turn implies 
that our results hold irrespective of the underlying distribution on the completions. 

Finally, we also briefly mention another line of research which deals with partial information settings and goes by the name of \textit{interview 
minimization} (see, for instance, \citep{ras13,dru14}). One of the main goals in this line of work is to come with a matching that is stable (and possibly satisfying some 
other desirable property) by conducting as few `interviews' (which in turn helps the agents in refining their preferences) as possible. We view this work as an interesting, 
orthogonal, direction from the one we pursue in this paper. 

\section{Preliminaries} \label{sec:prelims}
Let $U$ and $W$ be two disjoint sets. The sets $U$ and $W$ are colloquially referred to as the set of men and women, respectively, 
and $|U| = |W| = n$. We assume that each agent in $U$ and $W$ has a true strict linear order (i.e., a ranking without ties) over the agents in the other set, but this 
strict linear order may be unknown to the agents or they may be unwilling to completely disclose the same. Hence, each agent in $U$ and $W$ specifies a strict partial order 
over the agents 
in the other set (which we refer to as their \textit{preference order}) that is consistent with their underlying true orders, and $p_U$ and $p_W$, respectively, denote the 
collective preference orders of all the men and women. For a strict partial order $p_i$ associated with agent $i$, we denote the set of linear extensions associated with 
$p_i$ by 
$C(p_i)$ and denote by $C$ the Cartesian product of the $C(p_i)$s, i.e., $C = \bigtimes_{i \in U \cup W} C(p_{i})$. We refer to the set $C$ as ``the set of all completions'' 
where the term \textit{completion} refers to an element in $C$. Also, throughout, we denote strict preferences by $\succ$ and use $\succeq$ to denote the relation 
`weakly-prefers'. So, for instance, we say that an agent $c$ strictly prefers $a$ to $b$ and denote this by $a \succ_{c} b$ and use  $a \succeq_{c} b$ to denote that  either 
$c$ strictly prefers $a$ to $b$ or finds them incomparable. As mentioned in the introduction, we restrict our attention to the case when the strict partial orders submitted by 
the agents are strict weak orders over the set of agents in the other set. 

\textbf{Remark:} Strict weak orders are defined to be strict partial orders where incomparability is transitive. Hence, although the term \textit{tie} 
is used to mean 
indifference, it is convenient to think of strict weak orders as rankings with ties. Therefore, throughout this paper, whenever we say that agent $c$ finds $a$ and $b$ to be 
tied, we mean that $c$ finds $a$ and $b$ to be incomparable. Additionally, we will use the terms ties and incomparabilities interchangeably.   

An instance $\mathcal{I}$ of the stable marriage problem (SM) is defined as $\mathcal{I} = (\delta, p_U, p_w)$, where $\delta$ denotes the 
amount of missing information in that instance and this in turn, as defined in Section~\ref{sec:missinfo}, is the average number of pairwise comparisons that are missing from 
the instance, and $p_U$ and $p_W$ are as defined above. Given an instance $\mathcal{I}$, the aim is usually to come up with a matching $\mathcal{M}$---which in turn is a set 
of disjoint pairs $(m, w)$, where $m \in U$ and $w \in W$---that is stable. There are different notions of stability that have been proposed and below we define two of them 
that 
are relevant to our paper: \emph{i)} weak-stability and \emph{ii)} super-stability. However, before we look at their definitions we introduce the following terminology that 
will be used 
throughout this paper. (Note that in the definitions below we implicitly assume that in any matching $\mathcal{M}$ all the agents are matched. This is so because of the 
standard assumption that is made in the literature on SM (i.e., the stable marriage problem where every agent has a strict linear order over all the agents in the other set) 
that 
an agent always prefers to be matched to some agent than to remain unmatched.)

\begin{definition}[blocking pair/obvious blocking pair]
 Given an instance $\mathcal{I}$ and a matching $\mathcal{M}$ associated with $\mathcal{I}$, $(m, w)$ is said to be a blocking pair associated with $\mathcal{M}$ if $w 
\succ_{m} \mathcal{M}(m)$ and $m \succ_{w} \mathcal{M}(w)$. The term blocking pair is usually used in situations where the preferences of the agents are strict linear orders, 
so in cases where the preferences of the agents have missing information, we refer to such a pair as an obvious blocking pair.
\end{definition}

\begin{definition}[super-blocking pair]
 Given an instance $\mathcal{I}$ where the agents submit partial preference orders and a matching $\mathcal{M}$ associated with $\mathcal{I}$, we say that $(m, w)$ is a 
super-blocking pair associated with $\mathcal{M}$ if $w \succeq_{m} \mathcal{M}(m)$ and $m \succeq_{w} \mathcal{M}(w)$.
\end{definition}

Given the definitions above we can now define weak-stability and super-stability. 

\begin{definition}[weakly-stable matching]
 Given an instance $\mathcal{I}$ and matching $\mathcal{M}$ associated with $\mathcal{I}$, $\mathcal{M}$ is so said to be weakly-stable with respect to $\mathcal{I}$ if it 
does not have any obvious blocking pairs. When the preferences of the agents are strict linear orders, such a matching is just referred to as a stable matching. 
\end{definition}

\begin{definition}[super-stable matching]
  Given an instance $\mathcal{I}$ and matching $\mathcal{M}$ associated with $\mathcal{I}$, $\mathcal{M}$ is so said to be super-stable with respect to $\mathcal{I}$ if it 
does not have any super-blocking pairs.
\end{definition}

\subsection{What problems do we consider?} \label{sec:prob}
As mentioned in the introduction, we are interested in finding the minimax optimal matching where the objective is to minimize the number of blocking 
pairs, i.e., to find, from the set $S$ of all possible matchings, a matching that has the minimum maximum number of blocking pairs with respect to all the completions. 
This is formally defined below. 

\begin{problem}[$\delta$-minimax-matching] Given a $\delta \in [0, 1]$ and an instance $\mathcal{I} = (\delta', p_U, p_W)$, where $\delta' \leq \delta$ is the amount of 
missing information and $p_U, p_W$ are the preferences submitted by men and women respectively, compute $\mathcal{M}_{opt}$ where $\mathcal{M}_{opt} = \argmin_{\mathcal{M} 
\in S} \max_{c \in C} \abs{bp(\mathcal{M}, c)}$. 
\end{problem}

Although the problem defined above is our main focus, for the rest of this paper we will be talking in terms of the following problem which concerns itself with finding an 
approximately super-stable matching (i.e., a super-stable matching with the minimum number of super-blocking pairs). As we will see below, the reason we do the same is 
because both the problems are equivalent. 

\begin{problem}[$\delta$-min-bp-super-stable-matching] Given a $\delta \in [0, 1]$ and an instance $\mathcal{I} = (\delta', p_U, \allowbreak p_W)$, where $\delta' \leq 
\delta$ is the amount of missing information and $p_U, p_W$ are the preferences submitted by men and women respectively, compute $\mathcal{M}^{SS}_{opt}$ where 
$\mathcal{M}^{SS}_{opt} = \argmin_{\mathcal{M} \in S} \abs{\text{super-bp}(\mathcal{M})}$ and $\text{super-bp}(\mathcal{M})$ is the set of super-blocking pairs associated with 
$\mathcal{M}$ for the instance $\mathcal{I}$.
\end{problem}

Below we show that both the problems described above are equivalent. However, before that we prove the following lemma.

\begin{lemma} \label{lem:prob13}
Let $\mathcal{M}$ be a matching associated with some instance $\mathcal{I} = (\delta, p_U, p_W)$, $\alpha$ denote the maximum number of blocking pairs associated with 
$\mathcal{M}$ for any completion of $\mathcal{I}$, and $\beta$ denote the number of super-blocking pairs associated with $\mathcal{M}$ for the 
instance $\mathcal{I}$. Then, $\alpha = \beta$.
\end{lemma}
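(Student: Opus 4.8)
The plan is to prove the two inequalities $\alpha \le \beta$ and $\beta \le \alpha$ separately, the first by a pointwise containment argument and the second by exhibiting a single ``worst'' completion that simultaneously realizes all super-blocking pairs as obvious blocking pairs.

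First I would show $\alpha \le \beta$ by arguing that for \emph{every} completion $c \in C$ one has $bp(\mathcal{M}, c) \subseteq \text{super-bp}(\mathcal{M})$. Indeed, suppose $(m,w)$ is an obvious blocking pair of $\mathcal{M}$ under $c$, so $w \succ_m^c \mathcal{M}(m)$ and $m \succ_w^c \mathcal{M}(w)$. Since $c$ restricted to $m$'s preferences is a linear extension of $p_m$, the relation $\mathcal{M}(m) \succ_m w$ cannot hold in $p_m$ (otherwise it would be inherited by $c$), hence in $p_m$ either $w \succ_m \mathcal{M}(m)$ or $w$ and $\mathcal{M}(m)$ are incomparable; in both cases $w \succeq_m \mathcal{M}(m)$. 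Symmetrically $m \succeq_w \mathcal{M}(w)$, so $(m,w)$ is a super-blocking pair. Taking the maximum over $c$ yields $\alpha = \max_c \abs{bp(\mathcal{M}, c)} \le \abs{\text{super-bp}(\mathcal{M})} = \beta$.

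Next I would show $\beta \le \alpha$ by constructing a completion $c^\star$ in which every super-blocking pair becomes an obvious blocking pair. For each agent $i$, consider the partner $\mathcal{M}(i)$ and the set $T_i$ of agents that $i$ finds incomparable to $\mathcal{M}(i)$ in $p_i$. Because $p_i$ is a \emph{strict weak order}, incomparability is transitive (indeed an equivalence relation), so $T_i \cup \{\mathcal{M}(i)\}$ is exactly one indifference class of $p_i$; any linear extension must place this whole block between the agents $i$ strictly prefers to $\mathcal{M}(i)$ and those it strictly ranks below $\mathcal{M}(i)$. I would therefore define $c^\star$ to be any linear extension of the profile in which, for every agent $i$, $\mathcal{M}(i)$ is placed at the very bottom of its indifference block. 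Then, for every agent $i$ and every $x$ with $x \succeq_i \mathcal{M}(i)$ in $p_i$ (i.e.\ $x$ strictly above, or $x \in T_i$), we get $x \succ_i^{c^\star} \mathcal{M}(i)$. Consequently, if $(m,w)$ is a super-blocking pair of $\mathcal{M}$, then $w \succeq_m \mathcal{M}(m)$ gives $w \succ_m^{c^\star} \mathcal{M}(m)$ and $m \succeq_w \mathcal{M}(w)$ gives $m \succ_w^{c^\star} \mathcal{M}(w)$, so $(m,w) \in bp(\mathcal{M}, c^\star)$. Hence $\beta = \abs{\text{super-bp}(\mathcal{M})} \le \abs{bp(\mathcal{M}, c^\star)} \le \alpha$, and combined with the first part $\alpha = \beta$.

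The only point requiring care — and the step I expect to be the ``heart'' of the argument — is the construction of $c^\star$: one must check that pushing each $\mathcal{M}(i)$ to the bottom of its incomparability class is consistent with $p_i$ (which is where transitivity of incomparability in strict weak orders is essential) and that these choices can be made independently across agents, since each agent's preference list is completed separately. Everything else is a direct unwinding of the definitions of obvious blocking pair, super-blocking pair, and linear extension.
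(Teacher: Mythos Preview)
Your proposal is correct and follows essentially the same two-inequality strategy as the paper: show $\alpha \le \beta$ by observing that every obvious blocking pair in a completion is a super-blocking pair, then exhibit a single completion turning all super-blocking pairs into obvious ones. Your construction of $c^\star$ (push each $\mathcal{M}(i)$ to the bottom of its indifference class) is a cleaner, agent-by-agent formulation of what the paper does pair-by-pair, and it makes the role of transitivity of incomparability more explicit; otherwise the arguments are the same.
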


\begin{proof}
 First, it is easy to see that if there are $\alpha$ blocking pairs associated with $\mathcal{M}$ for a completion, then there are at least as many super-blocking pairs 
associated with $\mathcal{M}$. Therefore, $\alpha \leq \beta$. 

Next, we will show that if $\beta$ is the number of super-blocking pairs associated with $\mathcal{M}$, then $\mathcal{I}$ has at least one completion such that it has 
$\beta$ number of blocking pairs associated with $\mathcal{M}$. To see this, for each $m_i \in U$ and for each $w_j \in W$ such that $(m_i, w_j)$ is a super-blocking 
pair, do the following:
\begin{itemize}
 \item if $m_i$ finds $w_j$ incomparable to $\mathcal{M}(m_i)$, then construct a new partial order $p'_{m_i}$ for $m_i$ such that it is the same as $p_{m_i}$ except for the 
fact that in $p'_{m_i}$ we have that $m_i$ strictly prefers $w_j$ over $\mathcal{M}(m_i)$.
\item if $w_j$ finds $m_i$ incomparable to $\mathcal{M}(w_j)$, then construct a new partial order $p'_{w_j}$ for $w_j$ such that it is the same as $p_{w_j}$ except for the 
fact that in $p'_{w_j}$ we have that $w_j$ strictly prefers $m_i$ over $\mathcal{M}(w_j)$. 
\end{itemize}

Once the above steps are done, if there still exists any agent whose preference order is partial, then complete it arbitrarily. Now, consider this instance $\mathcal{I}'$ 
that is obtained. Then, again, it is easy to see that every $(m_i, w_j)$ which was a super-blocking pair associated with $\mathcal{M}$ in $\mathcal{I}$ forms a blocking pair 
in $\mathcal{M}$ with respect to $\mathcal{I}'$. Therefore, 
this completion has $\beta$ blocking pairs, and since the maximum number of blocking pairs in any completion is $\alpha$, we have that $\beta \leq \alpha$. Combining this 
with the case above, we have that $\alpha = \beta$.
\end{proof}

Given the lemma above, we can now prove our theorem.

\begin{theorem} \label{thm:prob13}
 For any $\delta \in [0, 1]$, the $\delta$-minimax-matching and $\delta$-min-bp-super-stable-matching problems are equivalent.
\end{theorem}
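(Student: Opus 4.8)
The plan is to obtain the theorem as an essentially immediate corollary of Lemma~\ref{lem:prob13}. That lemma says that for \emph{every} matching $\mathcal{M}$ associated with an instance $\mathcal{I} = (\delta', p_U, p_W)$ we have $\max_{c \in C} \abs{bp(\mathcal{M}, c)} = \abs{\text{super-bp}(\mathcal{M})}$; that is, the objective function of $\delta$-minimax-matching and the objective function of $\delta$-min-bp-super-stable-matching agree pointwise on the common feasible set $S$ of all matchings. So the first step is simply to invoke this identity.

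Next I would make precise what ``equivalent'' means here: both problems take the same input (a bound $\delta$ together with an instance $\mathcal{I} = (\delta', p_U, p_W)$ with $\delta' \le \delta$), both range over the same set $S$ of feasible solutions, and by the lemma both assign the same objective value to each feasible $\mathcal{M}$. Hence $\argmin_{\mathcal{M} \in S}\max_{c\in C}\abs{bp(\mathcal{M},c)} = \argmin_{\mathcal{M}\in S}\abs{\text{super-bp}(\mathcal{M})}$, so any $\mathcal{M}_{opt}$ is a valid $\mathcal{M}^{SS}_{opt}$ and conversely, and the two optimal values coincide. I would then note the algorithmic consequence that actually justifies the word ``equivalent'': since the reduction in either direction is the identity on both instances and solutions and the objective values transfer exactly, any algorithm that returns an optimal (or, for that matter, an approximately optimal) solution to one problem returns one of the same quality to the other, with no overhead. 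This is what licenses working solely with $\delta$-min-bp-super-stable-matching in the rest of the paper.

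There is essentially no obstacle beyond Lemma~\ref{lem:prob13} itself: once the pointwise equality of the two objectives is in hand, all that remains is the bookkeeping of restating the $\argmin$ identity and observing that approximation ratios are preserved. The only small point I would be careful about is that the maximization $\max_{c\in C}$ is over a nonempty set---each $C(p_i)$ is nonempty because a strict weak order always admits a linear extension---so the maximum is well defined and the lemma applies verbatim to every matching in $S$.
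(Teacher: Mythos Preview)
Your proposal is correct and follows essentially the same approach as the paper: both invoke Lemma~\ref{lem:prob13} to establish that the two objective functions agree pointwise on $S$, and then conclude that the optimal solutions coincide. The paper phrases the final step as a short contradiction argument (``if $\mathcal{M}$ is optimal for one but not the other, Lemma~\ref{lem:prob13} yields a contradiction''), whereas you state it directly as an equality of $\argmin$s, but this is only a cosmetic difference.
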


\begin{proof}
To see this, let $\mathcal{I} = (\delta', p_U, p_W)$ be some instance, where $\delta' \leq \delta$, and $\mathcal{M}$ be some matching associated with $\mathcal{I}$. We 
show that $\mathcal{M}$ is an 
optimal 
solution for the $\delta$-minimax-matching problem if and only if it is an optimal solution for the $\delta$-min-bp-super-stable-matching problem. 

$(\implies)$ Let us suppose that $\mathcal{M}$ is not an optimal solution for the $\delta$-min-bp-super-stable-matching problem. This implies that there exists some other 
$\mathcal{M}'$ such that $\abs{\text{super-bp}(\mathcal{M}')} \allowbreak < \abs{\text{super-bp}(\mathcal{M})}$. However, from 
Lemma~\ref{lem:prob13} we know that the maximum number of blocking pairs associated with $\mathcal{M}'$ for any completion with respect to $\mathcal{I}$ is equal to 
$\abs{\text{super-bp}(\mathcal{M}')}$, which in turn contradicts the fact that $\mathcal{M}$ was optimal for the $\delta$-minimax-matching problem. 

$(\impliedby)$ We can prove this analogously.
\end{proof}
 
For the rest of this paper, we assume that we are always dealing with instances which do not have a super-stable matching as this can be checked in polynomial-time 
\citep[Theorem 3.4]{irv94}. So, now, in the context of the $\delta$-min-bp-super-stable-matching problem, it is easy to show that if the number of super-blocking pairs $k$ in 
the optimal solution is a constant, then we can solve it in polynomial-time. We state this in the theorem below. Later, in Section~\ref{sec:beyondws}, we will see that the 
problem is NP-hard, even to approximate.

\begin{theorem} \label{thm:const-opt1}
 An exact solution to the $\delta$-min-bp-super-stable-matching problem can be computed in $\mathcal{O}(n^{2(k+1)})$ time, where $k$ is the number of 
super-blocking pairs in the optimal solution. 
\end{theorem}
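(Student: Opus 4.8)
The plan is to brute-force over the set of super-blocking pairs of an optimal matching: since that set has size $k$ there are only $O(n^{2k})$ candidates for it, and for each candidate I will reduce the question ``is there a matching whose super-blocking pairs are confined to this set?'' to the polynomial-time task of finding a super-stable matching in a modified instance. Concretely, the algorithm runs in rounds $k' = 0, 1, 2, \dots$; in round $k'$ it iterates over every $B \subseteq U \times W$ with $\abs{B} = k'$, forms the instance $\mathcal{I}'_B$ obtained from $\mathcal{I}$ by deleting $w$ from $p_m$ and $m$ from $p_w$ for each $(m,w) \in B$ (this keeps every preference order a strict weak order, now over a possibly shrunk ground set, i.e.\ an instance with ties and incomplete lists), computes a super-stable matching of $\mathcal{I}'_B$ if one exists, and, if the matching returned exists and matches all agents, outputs it and halts.

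The correctness hinges on the following equivalence, which I would prove first: for a set $B$ none of whose pairs is an edge of the matching in question, $\mathcal{I}$ has a perfect matching $\mathcal{M}$ with $\text{super-bp}(\mathcal{M}) \subseteq B$ if and only if $\mathcal{I}'_B$ has a perfect super-stable matching. (Here I will use the paper's convention that a matched pair is never a super-blocking pair---an agent is neither strictly preferred to, nor incomparable with, itself---so that $\text{super-bp}(\mathcal{M})$ always consists of non-edges of $\mathcal{M}$, and likewise the edges of any matching of $\mathcal{I}'_B$ are disjoint from $B$.) For the forward direction, deleting the pairs of $B$ leaves every edge of $\mathcal{M}$ intact, so $\mathcal{M}$ is still a perfect matching of $\mathcal{I}'_B$; and any super-blocking pair of $\mathcal{M}$ in $\mathcal{I}'_B$ would be a surviving pair that is weakly preferred on both sides, hence a super-blocking pair of $\mathcal{M}$ in $\mathcal{I}$ not lying in $B$---impossible. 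For the converse, if $\mathcal{M}'$ is a perfect super-stable matching of $\mathcal{I}'_B$, then any super-blocking pair of $\mathcal{M}'$ in $\mathcal{I}$ must be one of the deleted pairs (a surviving one would also super-block $\mathcal{M}'$ in $\mathcal{I}'_B$), so $\text{super-bp}(\mathcal{M}') \subseteq B$ and in particular $\abs{\text{super-bp}(\mathcal{M}')} \le \abs{B}$.

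With the equivalence in hand the iteration is correct. Let $r$ be the round in which the algorithm first halts and let $\mathcal{M}'$ be its output; by the converse direction $\abs{\text{super-bp}(\mathcal{M}')} \le r$, so the optimum is at most $r$. If the optimum were some $j < r$, then taking an optimal $\mathcal{M}^{SS}_{opt}$ and $B = \text{super-bp}(\mathcal{M}^{SS}_{opt})$ (of size $j$, consisting of non-edges of $\mathcal{M}^{SS}_{opt}$), the forward direction would show that $\mathcal{I}'_B$ has a perfect super-stable matching, so the algorithm would have halted in round $j < r$---a contradiction. Hence the optimum equals $r$ and $\mathcal{M}'$ is optimal. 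The structural fact that makes a single round implementable is a rural-hospitals-type property for super-stability: it is known (see \citep{irv94} and \citep{man13}) that in an instance with strict weak orders and incomplete lists either every super-stable matching matches all agents or none does, so computing one super-stable matching of $\mathcal{I}'_B$ (in $O(n^2)$ time, via Irving's algorithm extended to incomplete lists) and testing whether it is perfect correctly decides whether $\mathcal{I}'_B$ has a perfect super-stable matching.

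For the running time, round $k'$ examines $\binom{n^2}{k'} = O(n^{2k'})$ sets $B$ and spends $O(n^2)$ time on each (building $\mathcal{I}'_B$, running the super-stable-matching algorithm, checking perfectness, and recounting super-blocking pairs in $\mathcal{I}$), for $O(n^{2(k'+1)})$ in total; summing over $k' = 0, \dots, k$ is dominated by the last term and gives $O(n^{2(k+1)})$. I expect the main obstacle to be pinning down the rural-hospitals property precisely enough that ``a super-stable matching was found but it is not perfect'' genuinely certifies that $\mathcal{I}'_B$ has no perfect super-stable matching (equivalently, that $B$ cannot be the super-blocking set of any matching); the reduction and the bookkeeping are routine.
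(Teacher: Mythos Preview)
Your proposal is correct and follows the same high-level strategy as the paper: enumerate candidate super-blocking sets $B$ in increasing size and, for each, reduce the question ``is there a matching whose super-blocking pairs lie inside $B$?'' to a polynomial-time super-stability check on a modified instance. The running-time accounting is identical.

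The difference lies only in the modification step. The paper keeps preference lists complete: for each $(m,w)\in B$ it \emph{demotes} $m$ to the bottom of $w$'s list and $w$ to the bottom of $m$'s list (tying multiple demoted partners if necessary), and then tests for a super-stable matching in the resulting complete-list instance. You instead \emph{delete} the pair from both lists, obtaining an instance with ties and incomplete lists, and then need two extra ingredients: Irving's super-stability algorithm in the incomplete-list setting and the rural-hospitals property for super-stable matchings (so that one super-stable matching being imperfect certifies that none is perfect). Both facts are indeed known, so your reduction goes through. The trade-off is that the paper's demotion trick stays within the complete-list world and avoids the rural-hospitals appeal, at the price of a slightly more delicate argument that demotion neither destroys the optimal matching's super-stability nor creates new blocking pairs; your deletion argument is cleaner on the equivalence side but imports more external machinery. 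Either route yields the stated $\mathcal{O}(n^{2(k+1)})$ bound.
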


\begin{proof}
 We will describe the algorithm below whose main idea is based on the following observation. 

 For an instance $\mathcal{I}$, consider its optimal solution $\mathcal{M}_{opt}$ and let the $k$ super-blocking pairs associated with $\mathcal{M}_{opt}$ be $B = \{(m_1, 
w_1), \cdots, (m_k, w_k)\}$. Next, for each such pair $(m_i, w_i)$, put $m_i$ ($w_i$) at the end of $w_i$'s ($m_i$'s) preference list (i.e., make every other man (woman), 
except those involved in another blocking pair with $w_i$ ($m_i$), rank better than $m_i$ ($w_i$)). If either of them are involved in multiple blocking pairs in $B$, then 
make those partners as incomparable at the end of the preference list. Let us call the new instance $\mathcal{I}'$. Notice that $\mathcal{M}_{opt}$ is super-stable with 
respect to $\mathcal{I}'$ as the pairs in $B$ are no longer blocking and no new blocking pairs are created because of our manipulations to the preference list.  

Given the above observation, we can now describe the exponential algorithm.
\begin{itemize}
 \item Initially $j=1$. Given a $j$, try out every possible set of pairs of size $j$ to see if they are the right blocking pairs.
 \item For each set generated in the previous step, modify the original instance $\mathcal{I}$ to $\mathcal{I}'$ as described above and see if $\mathcal{I}'$ has a 
super-stable matching (this can be done in polynomial time). If yes, then return the super-stable matching as that is the solution. Otherwise, if none of the sets of size $j$ 
result in a ``yes'', then go back to step 1 and try again with the next value of $j$.  
\end{itemize}

Now, it is easy to see that we end up with the optimal solution this way since we try all possible sets of blocking pairs. As for the time, we know that for each $j$ we have 
at most $(n^2)^j$ choices of sets and for each set we need at most $2n^2$ time to do the necessary manipulations to the instance and to check for super-stability. 
Hence, the total time required is $\sum_{j=1}^{k} 2n^{2j + 2} = \mathcal{O}(n^{2(k+1)})$.
\end{proof}

\section{Investigating weakly-stable matchings} \label{sec:ws}

In this section we focus on situations where obvious blocking pairs are not permitted in the final matching. In particular, we explore the space of weakly-stable matchings 
and ask whether it is possible to find weakly-stable matchings that also provide good approximations to the $\delta$-min-bp-super-stable-matching problem (and thus 
the $\delta$-minimax-matching problem).

\subsection{\texorpdfstring{Approximating $\delta$-min-bp-super-stable-matching with weakly-stable matchings}{}} \label{sec:ws-1}
It has previously been established that a matching is weakly-stable if and only if it is stable with respect to at least one completion \citep[Section 1.2]{man02}. Therefore, 
given this result, one immediate question that arises in the context of approximating the $\delta$-min-bp-super-stable-matching problem is ``what if we just fill in the 
missing information arbitrarily and then compute a stable matching associated with such a completion?'' This is the question we consider here, and we show that 
weakly-stable matchings do give a non-trivial (i.e., one that is $o(n^2)$, as any matching has only $\mathcal{O}(n^2)$ super-blocking pairs) approximation bound for our 
problem for certain values of $\delta$. The proof of the following theorem is through a simple application of the Cauchy-Schwarz inequality.

\begin{theorem}\label{thm:WO}
 For any $\delta > 0$ and an instance $\mathcal{I} = (\delta', p_U, p_W)$ where $\delta' \leq \delta$, any weakly-stable matching with respect to $\mathcal{I}$ gives 
an $\mathcal{O}\left(\min\left\{n^3\delta, n^2\sqrt{\delta}\right\}\right)$-approximation for the $\delta$-min-bp-super-stable-matching problem.
\end{theorem}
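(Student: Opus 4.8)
The plan is to fix an arbitrary weakly-stable matching $\mathcal{M}$ for $\mathcal{I}$ (one exists: take any completion $c\in C$ and a stable matching of $c$, which by the characterization quoted at the start of Section~\ref{sec:ws-1} is weakly-stable with respect to $\mathcal{I}$) and bound the number $\beta$ of super-blocking pairs associated with $\mathcal{M}$. Since we assume throughout that $\mathcal{I}$ has no super-stable matching, the optimal value $k^{\ast}$ of the $\delta$-min-bp-super-stable-matching problem satisfies $k^{\ast}\ge 1$; hence the approximation ratio of $\mathcal{M}$ is $\beta/k^{\ast}\le\beta$, and it suffices to prove $\beta=\mathcal{O}\!\left(\min\{n^{3}\delta,\,n^{2}\sqrt{\delta}\}\right)$. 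The structural fact that drives the argument is that, because $\mathcal{M}$ has no obvious blocking pair, every super-blocking pair $(m,w)$ of $\mathcal{M}$ must involve a tie on at least one side: either $m$ finds $w$ and $\mathcal{M}(m)$ incomparable, or $w$ finds $m$ and $\mathcal{M}(w)$ incomparable (if both $w\succ_{m}\mathcal{M}(m)$ and $m\succ_{w}\mathcal{M}(w)$ held, then $(m,w)$ would be an obvious blocking pair, contradicting weak-stability).

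Next I would convert this into a counting bound localized to indifference classes. For a man $m$ let $t_m$ denote the size of the indifference class of $\mathcal{M}(m)$ in $p_m$, and define $s_w$ analogously for a woman $w$. The number of women $w\neq\mathcal{M}(m)$ that $m$ ties with $\mathcal{M}(m)$ is exactly $t_m-1$, and symmetrically for women, so the structural fact gives
\[
\beta \;\le\; \sum_{m\in U}(t_m-1)\;+\;\sum_{w\in W}(s_w-1).
\]
To relate $t_m$ to the missing information $\delta_m$, note that since $p_m$ is a strict weak order, the pairwise comparisons that cannot be inferred from $p_m$ are precisely the incomparable pairs, of which there are $\delta_m\binom{n}{2}$; and the single indifference class of $\mathcal{M}(m)$ already contributes $\binom{t_m}{2}$ such pairs. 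Hence $\binom{t_m}{2}\le\delta_m\binom{n}{2}$, i.e.\ $t_m(t_m-1)\le n(n-1)\delta_m$, and likewise $s_w(s_w-1)\le n(n-1)\delta_w$.

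The two bounds now drop out by manipulating this quadratic inequality in two ways, together with $\sum_{i\in U\cup W}\delta_i=2n\delta$ (the definition of $\delta$). For the $\mathcal{O}(n^{3}\delta)$ bound: when $t_m\ge 2$ we have $t_m-1\le n(n-1)\delta_m/t_m\le n(n-1)\delta_m/2$ (and $t_m-1=0$ otherwise), so summing over all $2n$ agents gives $\beta\le\tfrac{n(n-1)}{2}\cdot 2n\delta=\mathcal{O}(n^{3}\delta)$. For the $\mathcal{O}(n^{2}\sqrt{\delta})$ bound: when $t_m\ge 2$ we have $t_m^{2}/2\le t_m(t_m-1)\le n(n-1)\delta_m$, hence $t_m-1<\sqrt{2n(n-1)\delta_m}$ (and $0$ otherwise), so by Cauchy--Schwarz
\[
\beta\;\le\;\sqrt{2n(n-1)}\sum_{i\in U\cup W}\sqrt{\delta_i}\;\le\;\sqrt{2n(n-1)}\cdot\sqrt{2n}\,\sqrt{\textstyle\sum_{i}\delta_i}\;=\;\sqrt{2n(n-1)}\cdot\sqrt{2n}\cdot\sqrt{2n\delta}\;=\;\mathcal{O}(n^{2}\sqrt{\delta}).
\]
Taking the better of the two bounds yields $\beta=\mathcal{O}\!\left(\min\{n^{3}\delta,\,n^{2}\sqrt{\delta}\}\right)$, and since $k^{\ast}\ge 1$ this is also a bound on the approximation ratio, proving the theorem.

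I do not expect a genuine obstacle here: the only points needing care are (i) the structural claim that weak-stability forces every super-blocking pair to use a tie, which is exactly what lets us charge super-blocking pairs to indifference classes, and (ii) correctly tracking constants through the conversion between ``number of incomparable pairs in $p_i$'' and $\delta_i$ and through the Cauchy--Schwarz step. One should also note that the statement is for $\delta>0$, and even in the smallest admissible regime (one agent with one missing comparison) both bounds remain meaningful because $\beta\ge 1$ whenever $\mathcal{I}$ has no super-stable matching.
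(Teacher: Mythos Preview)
Your proposal is correct and follows essentially the same line as the paper's proof: both hinge on the observation that weak-stability forces every super-blocking pair to involve a tie at one of its endpoints, then charge super-blocking pairs to indifference-class sizes, relate those sizes quadratically to the missing information via $\binom{t}{2}\le\delta_i\binom{n}{2}$, and finish with Cauchy--Schwarz together with $\sum_i\delta_i\le 2n\delta$. The paper's bookkeeping is organized a little differently---it assigns each super-blocking pair to a single agent, tracks the number $d$ of such agents and their \emph{largest} tie lengths $\ell_i$, applies Cauchy--Schwarz to $\sum\ell_i^2$, and then solves a quadratic in $\sum\ell_i$ using the two bounds $d\le 2n$ and $d\le\delta n^3$---whereas you sum over all $2n$ agents using the tie containing the partner and bound each $t_m-1$ termwise before applying Cauchy--Schwarz to $\sum_i\sqrt{\delta_i}$; your route is arguably more direct, but the underlying ideas coincide.
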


\begin{proof}
 Let $\mathcal{M}$ be a weakly-stable matching associated with $\mathcal{I}$. By the definition of weakly-stable matchings we know that $\mathcal{M}$ does 
not have any obvious blocking pairs. This implies that for every super-blocking pair $(m, w)$ associated with $\mathcal{M}$, either $m$ finds $w$ incomparable to his 
partner 
$\mathcal{M}(m)$ or $w$ finds $m$ incomparable to her partner $\mathcal{M}(w)$. If it is the former then we refer to the super-blocking pair $(m, w)$ as one that is 
associated with $m$ and if not we say that it is associated with $w$. Next, let us suppose that there are $d$ agents who have a blocking pair associated with them and 
let $b_i$ denote the number of super-blocking pairs associated with agent $i$. So, now, the number of super-blocking pairs, $\abs{\text{super-bp}(\mathcal{M})}$, associated 
with $\mathcal{M}$ can be written as 
 \begin{align} \label{ws:eq1}
  \abs{\text{super-bp}(\mathcal{M})} &= \sum_{i = 1}^{d} b_{i} \leq \sum_{i = 1}^{d} (\ell_{i} - 1) = \sum_{i = 1}^{d} \ell_{i} - d ,
 \end{align}
 where $\ell_i$ refers to the length of largest tie associated with agent $i$ and the inequality follows from the definition of an association of a super-blocking pair with 
an agent.
 
 Additionally, for each $i \in \{1, \cdots, d\}$, we know that at least $\binom{\ell_i}{2}$ pairwise comparisons are missing with respect to $i$ (since $i$ has a 
tie of length $\ell_i$). Therefore, using the Cauchy-Schwarz inequality, we have that 
 \begin{align} \label{ws:eq2}
  \sum_{i=1}^{d} \binom{\ell_{i}}{2} 
  &= \frac{1}{2}\sum_{i=1}^{d} \ell_{i}^2 - \ell_{i} 
  \geq \frac{1}{2} \left[\frac{1}{d}\left(\sum_{i=1}^{d} \ell_{i}\right)^2 -  \sum_{i=1}^{d} \ell_{i}\right]. 
 \end{align}
 
 Also, since the total amount of missing information $\delta'$ in the instance $\mathcal{I}$ is less than or equal to $\delta$ and since each $\ell_i \geq 2$ (as 
it is a weak order and a tie, if it exists, is of length at least 2) we 
have that 
 \begin{align} \label{eq3}
  &d \leq \sum_{i=1}^{d} (\ell_i - 1) \leq \sum_{i=1}^{d} \binom{\ell_{i}}{2} \leq \delta (2n) \binom{n}{2} \leq \delta n^3.
 \end{align}
 
 Now, using Equation~\ref{eq3} and the fact that $d$ is also upper-bounded by $2n$ (as there are only $2n$ agents in the instance), we have that $d \leq 
\min\{2n, \delta n^3\}$. 
Therefore, using Equation~\ref{ws:eq2} and again using the fact that $\delta$ is the maximum amount of missing information, we have,
 \begin{align} 
  &\frac{1}{2} \left[\frac{1}{d}\left(\sum_{i=1}^{d} \ell_{i}\right)^2 -  \sum_{i=1}^{d} \ell_{i}\right] \leq \sum_{i=1}^{d} \binom{\ell_{i}}{2} \leq \delta (2n) 
\binom{n}{2}.
\end{align}
This in turn implies that if we solve for $\sum_{i=1}^{d} \ell_i$, we have,
\begin{align*}
  \sum_{i=1}^{d} \ell_{i} \leq \frac{1}{2} \left(d + \sqrt{d^2 + 8dn^2(n-1)\delta}\right) < d + \sqrt{d^2 + 8dn^2(n-1)\delta}.
\end{align*}
So, now, we can use the fact that $d \leq \min\{2n, \delta n^3\}$ to see that
\begin{align*}
 & \sum_{i=1}^{d} \ell_{i} - d \leq \min\{4n^3\delta, 5n^2\sqrt{\delta}\}. 
\end{align*}
 
 Finally, this along with Equation~\ref{ws:eq1} gives our result since the number of super-blocking pairs in the optimal solution is at least 1 (since, as mentioned in 
Section~\ref{sec:prob}, we are only considering instances that do not have a super-stable matching).
\end{proof}

\subsection{Can we do better when restricted to weakly-stable matchings?} \label{sec:ws-2}
While Theorem~\ref{thm:WO} established an approximation factor for the $\delta$-min-bp-super-stable-matching problem when considering only weakly-stable matchings, it was 
simply based on arbitrarily filling-in the missing information. Therefore, there remains the question as to whether one can be clever about handling the missing information 
and as a result obtain improved approximation bounds. In this section we consider this question and show that for many values of $\delta$ the approximation factor obtained 
in Theorem~\ref{thm:WO} is asymptotically the best one can achieve when restricted to weakly-stable matchings.

\begin{theorem} \label{thm:ws-tight}
For any $\delta \in [\frac{16}{n^2}, \frac{1}{4}]$, if there exists an $\alpha$-approximation algorithm for $\delta$-min-bp-super-stable-matching that always returns a 
matching that is 
weakly-stable, then $\alpha \in \Omega\left(n^2\sqrt{\delta}\right)$. Moreover, this result is true even if we allow only one side to specify ties and also insist that 
all the ties need to be at the top of the preference order.  
\end{theorem}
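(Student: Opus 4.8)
The plan is to prove the lower bound by constructing, for each admissible $\delta$, one instance $\mathcal{I}=(\delta',p_U,p_W)$ with $\delta'\le \delta$ on which the \emph{best} weakly-stable matching already has $\Omega(n^2\sqrt{\delta})$ super-blocking pairs, whereas $\mathcal{M}^{SS}_{opt}$ has only $\mathcal{O}(1)$ of them. Given such an $\mathcal{I}$, any $\alpha$-approximation algorithm that always returns a weakly-stable matching must output a matching whose super-blocking-pair count is at least that of the best weakly-stable matching, so $\alpha\ge \Omega(n^2\sqrt{\delta})/\mathcal{O}(1)=\Omega(n^2\sqrt{\delta})$; by Lemma~\ref{lem:prob13} and Theorem~\ref{thm:prob13} this is the same statement for the $\delta$-minimax-matching problem. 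Throughout, set $\ell:=\Theta(n\sqrt{\delta})$; the hypothesis $\delta\in[\tfrac{16}{n^2},\tfrac14]$ is precisely what guarantees that $\ell$ is an integer that is at least a fixed constant and at most, say, $n/2$, which is what the gadget needs.

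The instance I would build gives ties only to the women, and to each of a linear number of women it gives a single tie of length $\Theta(\ell)$ at the very top of her list; since a woman with a top-tie of length $\Theta(\ell)$ is missing $\binom{\Theta(\ell)}{2}$ comparisons and the men are missing none, one gets $\delta'=\frac{1}{2n}\sum_i\delta_i=\Theta(\ell^2/n^2)$, and $\ell=\Theta(n\sqrt{\delta})$ is chosen exactly so that $\delta'\le\delta$ (the same calculation appears, run backwards, inside the proof of Theorem~\ref{thm:WO}). The combinatorial core is to partition the $n$ men and $n$ women into $\Theta(n/\ell)$ ``bins'' of $\Theta(\ell)$ agents each; a woman's top tie is (essentially) the set of men of her bin, and the men of a bin are given a common, strongly overlapping ranking of the women of that bin---this is the source of congestion. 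Layered on top is a ``rotation'' linking the bins cyclically, whose only job is to make sure that the single way of dissolving the per-bin congestion necessarily creates an obvious blocking pair.

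With $\mathcal{I}$ fixed, the first thing to establish is that OPT is tiny: exhibit the single matching $\mathcal{M}^*$ that ``opens'' the rotation at one point and check by inspection that it has $\mathcal{O}(1)$ super-blocking pairs (which simultaneously shows $\mathcal{I}$ has no super-stable matching, so we are in the intended regime), while verifying that $\mathcal{M}^*$ contains an obvious blocking pair and hence is not weakly-stable---so no weakly-stable matching may copy it. The main step is the lower bound over weakly-stable matchings: take any matching $\mathcal{M}$ with no obvious blocking pair; argue first that within each bin essentially every woman must be matched to a man of her top tie-group (otherwise she strictly prefers each of her bin-men to that partner, and by the rotation component at least one of those bin-men strictly prefers her to his own partner, producing an obvious blocking pair); then, inside a bin, since its $\Theta(\ell)$ men rank its $\Theta(\ell)$ women in essentially the same order, the ranks of their partners form a permutation of $\{1,\dots,\Theta(\ell)\}$, so on average a man is matched to roughly his $\Theta(\ell)$-th choice, and every woman he strictly prefers who ties him with her partner is a distinct super-blocking pair; summing the resulting $\Omega(\ell^2)$ over the $\Theta(n/\ell)$ bins gives $\Omega(n\ell)=\Omega(n^2\sqrt{\delta})$. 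Combining the two steps yields $\alpha=\Omega(n^2\sqrt{\delta})$, and since only the women were given ties and all of them at the top, the final sentence of the theorem is automatic.

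The step I expect to be the real obstacle is the joint design of the rotation component used both in the ``OPT is tiny'' step and in the ``no obvious blocking pair'' step: it must be cheap to open (cost $\mathcal{O}(1)$ for $\mathcal{M}^*$), illegal to open for weakly-stable matchings (forcing an obvious blocking pair), and, crucially, must not itself open a back door---the recurring trap being that a matching which simply reroutes men out of their congested bins tends to be weakly- or even super-stable, so the cyclic linkage has to be arranged so that every such rerouting is penalized by an obvious blocking pair. Getting all three properties out of one preference profile, while keeping the missing-information budget at $\delta$, is the crux of the argument.
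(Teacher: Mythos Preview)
Your proposal is correct and follows essentially the same approach as the paper: build $\Theta(n/\ell)$ blocks of size $\ell=\Theta(n\sqrt{\delta})$, give each woman in a block a top-tie over (roughly) the men of her block, exhibit a non-weakly-stable matching with a single super-blocking pair, and count $\Omega(\ell^2)$ super-blocking pairs per block in any weakly-stable matching. The one place where the paper differs from your sketch is the forcing mechanism: rather than a cyclic rotation linking the bins, the paper reserves the first $n/2$ agents on each side as a separate ``trigger'' gadget in which one obvious blocking pair $(m_2,w_1)$ is unavoidable for weak stability; this displaces $m_1$, and $m_1$'s preference list (which ranks every block-woman above every $F$-woman) together with the block-women's lists (which rank $m_1$ above their same-index man) then forces, in every weakly-stable matching, that no man $m_i$ with $i$ in a block can be matched to $w_i$---after which the per-block counting goes through exactly as you describe. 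This trigger-and-cascade design sidesteps the ``back door'' worry you raised, since the displacement is driven by a single agent $m_1$ rather than by inter-bin cycles.
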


\begin{proof}
 At a very high-level, the key idea in this proof is to create an instance $\mathcal{I}$ such that if we insist on there being no obvious blocking pairs, then this results in 
some kind of a ``cascading effect'', thus in turn causing a very sharp blow-up in the number of super-blocking pairs. With this intuition, we first construct an 
instance $\mathcal{I}$ as shown in Figure~\ref{fig1}, where ties appear only on the women's side. Furthermore, we define the following:  
\begin{itemize}
 \item $y = \frac{n\sqrt{\delta}}{2}$,  $z = \frac{n}{2y}$ (for simplicity we assume that $y$ and $z$ are integers; we can appropriately modify the proof if that is not the 
case)
 \item $b_j = \frac{n}{2} + jy + 1, \forall j \in [0, \cdots z]$, 
 $B_i = \{b_{i-1}, \cdots, b_{i} - 1\}, \forall i \in [1, \cdots z]$
 \item$F = \{1, \cdots, \frac{n}{2}\}, S = \{\frac{n}{2} + 1, \cdots, n\}$
 \item$W_X\,(M_X):$ {for some set $X$, place all the women (men) with index in $X$ in the increasing order of their indices}
 \item$W^T_X\,(M^T_X):$ {for some set $X$, place all the women (men) with index in $X$ as tied}
 \item$[ \cdots ]:$ {place all the remaining alternatives in some strict order}.
\end{itemize}

\begin{figure}[t!] 
{\scriptsize
\noindent\hrulefill\\[0.5ex]
\begin{minipage}{0.49\textwidth}
\centering \underline{Men}
\begin{alignat*}{1}
 m_1 &: w_1 \succ W_{B_1}\succ \cdots \succ W_{B_z} \succ W_{F\setminus\{1\}}\\
 m_2 &: w_1 \succ w_2 \succ [ \cdots ]\\
 m_3 &: w_2 \succ w_3 \succ W_{F\setminus\{2, 3\}} \succ W_S\\
 m_4 &: w_2 \succ w_4 \succ W_{F\setminus\{2, 4\}} \succ W_S\\[-1ex]
 &\hspace{20mm}\vdots\\[-1ex]
 m_{\frac{n}{2}} &: w_2 \succ w_{\frac{n}{2}} \succ W_{F\setminus\{2, \frac{n}{2}\}} \succ W_S\\
 m_{b_0} &: w_1 \succ w_{b_0} \succ W_{B_1 \setminus \{b_0\}} \succ W_{S\setminus B_1} \succ W_{F\setminus\{1\}}\\[-1ex]
 &\hspace{20mm}\vdots\\[-1ex]
 m_{b_1 - 1} &: w_1 \succ w_{b_1 - 1} \succ W_{B_1 \setminus \{b_1 - 1\}} \succ W_{S\setminus B_1} \succ W_{F\setminus\{1\}}\\
 m_{b_1} &: w_1 \succ w_{b_1} \succ W_{B_2 \setminus \{b_1\}} \succ W_{S\setminus B_2} \succ W_{F\setminus\{1\}}\\[-1ex]
 &\hspace{20mm}\vdots\\[-1ex]
 m_{b_2 - 1} &: w_1 \succ w_{b_2 - 1} \succ W_{B_2 \setminus \{b_2 - 1\}} \succ W_{S\setminus B_2} \succ 
W_{F\setminus\{1\}}\\[-1ex]
 &\hspace{20mm}\vdots\\[-1ex]
   m_{b_{z-1}} &: w_1 \succ w_{b_{z-1}} \succ W_{B_z \setminus \{b_{z-1}\}} \succ W_{S\setminus B_z} \succ 
W_{F\setminus\{1\}}\\[-1ex]
 &\hspace{20mm}\vdots\\[-1ex]
 m_{b_{z} - 1} &: w_1 \succ w_{b_{z} - 1} \succ W_{B_z \setminus \{b_{z} - 1\}} \succ W_{S\setminus B_z} \succ 
W_{F\setminus\{1\}}
\end{alignat*} 
\end{minipage}
 \vrule{} 
\begin{minipage}{0.49\textwidth}
\centering \underline{Women}
\begin{alignat*}{1}
 w_1&: m_2 \succ m_1 \succ [ \cdots ]\\
 w_2&: m_2 \succ M_{(F\cup S) \setminus \{1\}} \succ m_1\\
 w_3&: m_1 \succ m_3 \succ [ \cdots ]\\
 w_4&: m_1 \succ m_4 \succ [ \cdots ]\\[-1ex]
 &\hspace{20mm}\vdots\\[-1ex]
 w_{\frac{n}{2}}&: m_1 \succ m_{\frac{n}{2}} \succ [ \cdots ]\\
 w_{b_0}&: M^T_{B_1 \setminus \{b_0\}} \succ M_{S\setminus B_1} \succ m_1 \succ m_{b_0} \succ M_{F\setminus\{1\}}\\[-1ex]
 &\hspace{20mm}\vdots\\[-1ex]
 w_{b_1 - 1} &:  M^T_{B_1 \setminus \{b_1 - 1\}} \succ M_{S\setminus B_1} \succ m_1 \succ m_{b_1-1} \succ M_{F\setminus\{1\}}\\
 w_{b_1} &:  M^T_{B_1 \setminus \{b_1\}} \succ M_{S\setminus B_1} \succ m_1 \succ m_{b_1} \succ M_{F\setminus\{1\}}\\[-1ex]
 &\hspace{20mm}\vdots\\[-1ex]
 w_{b_2 - 1} &:  M^T_{B_1 \setminus \{b_2-1\}} \succ M_{S\setminus B_1} \succ m_1 \succ m_{b_2-1} \succ M_{F\setminus\{1\}}\\[-1ex]
 &\hspace{20mm}\vdots\\[-1ex]
 w_{b_{z-1}} &:  M^T_{B_1 \setminus \{b_{z-1}\}} \succ M_{S\setminus B_1} \succ m_1 \succ m_{b_{z-1}} \succ M_{F\setminus\{1\}}\\[-1ex]
 &\hspace{20mm}\vdots\\[-1ex]
 w_{b_{z} - 1} &: M^T_{B_1 \setminus \{b_z-1\}} \succ M_{S\setminus B_1} \succ m_1 \succ m_{b_z-1} \succ M_{F\setminus\{1\}}
\end{alignat*}
\end{minipage}
\hrule 
\caption{\small The instance $\mathcal{I}$ that is used in the proof of Theorem~\ref{thm:ws-tight}}
\label{fig1}
}
\end{figure} 

Next, we will show that all the weakly-stable matchings associated with $\mathcal{I}$ have $\mathcal{O}\left(n^2\sqrt{\delta}\right)$ super-blocking pairs, whereas the 
optimal solution has exactly one super-blocking pair. To do this, first note that the optimal solution $\mathcal{M}_{opt}$ associated with the instance is
 $\mathcal{M}_{opt} = \{(m_1, w_1), (m_2, w_2), \allowbreak \cdots \allowbreak, (m_n, w_n)\}$,
where $(m_2, w_1)$ is the only super-blocking pair (and it is an obvious blocking pair). Also, 
it can be verified that the total amount of missing information in $\mathcal{I}$ is at most $\delta$. So, next, we prove the following claim.

\begin{claim}
 If $\mathcal{M}$ is a weakly-stable matching associated with the instance $\mathcal{I}$, then $\forall i \in \{\frac{n}{2}+1, \cdots, n\}, \mathcal{M}(m_i) \neq w_i$. 
\end{claim}
\begin{proof}
First, note that in any weakly-stable matching $m_2$ will always be matched to $w_1$ as otherwise it will result in an obvious blocking pair. Next, let us suppose that 
there 
exists an $i \in \{\frac{n}{2}+1, \cdots, n\}$ such that $\mathcal{M}(m_i) = w_i$. Now, we will consider the following two cases and show that in both the cases this is 
impossible. 
\item{\textbf{Case 1.} $m_1$ is matched to a woman $w \in W_{F\setminus\{1\}}$ in $\mathcal{M}$:} In this case one can see that $(m_1, w_i)$ forms an obvious blocking 
pair.  
\item{\textbf{Case 2.} $m_1$ is matched to a woman $w \in W_S$ in $\mathcal{M}$:} Note that if this is the case, then there is at least one $j \in S$ such that $m_j$ is 
matched with a 
woman $w \in W_F$. Now, notice that $(m_j, w_i)$ forms an obvious blocking pair.
%
\end{proof}

Given the claim above, consider a man $m$ whose index is in some block $B_j$ and let his index value be $k$. From the way the preferences are defined, it is easy to see 
that 
in any weakly-stable matching, $m$ will be matched to a woman $w$ whose index lies in the same block $B_j$ (because otherwise it will result in an obvious blocking 
pair). At the same time, from the claim above we know that this $w$'s index is not $k$. Now, let us consider the woman $w_{b_j-1}$ who is the woman with the highest index 
value in $B_j$ and let $m'$ denote the man who is matched to  $w_{b_j-1}$ in a weakly-stable matching. From the observation above we know that $m'$ has an index value in 
$B_j$. Additionally, given the way the preferences are defined for $m'$ and using the fact that any woman $w_{p}$ such that $p \in B_j$ finds all the men in 
$B_j\setminus\{p\}$ to be incomparable, one can see that $m'$ forms $(|B_j| - 2)$ super-blocking pairs (with all the women in $B_j$ except $w_{b_j-1}$ and the one with the 
same index value as $m'$). Also, by using the same argument again, but with respect to $w_{b_j-2}$, we can show that partner of $w_{b_j-2}$ in the matching forms at least 
$(|B_j| - 3)$ super-blocking pairs (with all women except $w_{b_j-2}$, $w_{b_j-1}$, and the one with the same index). Continuing this way we see that each block $B_j$ 
contributes $\mathcal{O}(|B_j|^2)$ super-blocking pairs. And so, since there are $z$ blocks and $|B_j| = y$ for all $j$, we have that there are 
$\mathcal{O}(n^2\sqrt{\delta})$ super-blocking pairs in any weakly-stable matching.
\end{proof}

\subsection{\texorpdfstring{The case of one-sided top-truncated preferences: An $\mathcal{O}(n)$ approximation algorithm for $\delta$-min-bp-super-stable-matching}{}}
\label{sec:ws-onesided}
Although Theorem~\ref{thm:ws-tight} is an inherently negative result, in this section we consider an interesting restriction on the preferences of the agents and show how 
this 
negative result can be circumvented. In particular, we consider the case where only agents on one side are allowed to specify ties and all the ties 
need to be at the bottom. Such a restriction has been looked at previously in the context of matching problems 
and as noted by Irving and Manlove \cite{irv08} is one that appears in practise in the Scottish Foundation Allocation Scheme (SFAS). Additionally, restricting ties to only at 
the bottom 
models a very well-studied class of preferences known as top-truncated preferences, which has received considerable attention in the context of voting (see, for 
instance, \citep{bau12}). 

Top-truncated preferences model scenarios where an agent is certain about their most preferred choices, but is indifferent among the remaining 
ones or is unsure about 
them. More precisely, in our setting, the preference order submitted by, say, a woman $w$ is said to be a top-truncated order if it is a linear order over a 
subset of $U$ and the remaining men  are all considered to be incomparable by $w$. In this section we consider one-sided top-truncated 
preferences, i.e., where only men or women are allowed to specify top-truncated orders, and show an $\mathcal{O}(n)$-approximation algorithm for 
$\delta$-min-bp-super-stable-matching under this setting. (Without loss of generality we assume throughout that only the women submit strict weak orders.) Although 
arbitrarily filling-in the missing information and computing the resulting weakly-stable matching can lead to an $\mathcal{O}(n^2\sqrt{\delta})$-approximate matching even for 
this restricted case (see Appendix~\ref{sec:example} for an example), we will see that not all weakly-stable matchings are ``bad'' and that in fact the 
$\mathcal{O}(n)$-approximate matching we obtain is weakly-stable.

However, in order to arrive at this result, we first introduce the following problem which might be of independent interest. (To the best of our knowledge, this has not been 
previously considered in the literature.) Informally, in this problem we are given 
an instance $\mathcal{I}$ and are asked if we can delete some of the agents to ensure that the instance, when restricted to the remaining agents, will have a 
perfect super-stable matching. 

\begin{problem}[min-delete-super-stable-matching] \label {prob:min-del} 
Given an instance $\mathcal{I} = (\delta, p_U, p_W)$, where $\delta$ is the amount of missing information and $p_U, p_W$ are the preferences submitted by men and women 
respectively, compute the set $D$ of minimum cardinality such that the instance $\mathcal{I}_{-D} = (\delta_{-D}, p_{U\setminus D}, p_{W\setminus D})$, where $\delta_{-D} = 
\frac{1}{|(U \cup W) \setminus D|}\sum_{i \in (U \cup W) \setminus D} \delta_i$, has a perfect super-stable matching (i.e., every agent in $(U \cup W) 
\setminus D$ is matched in a super-stable matching). 
\end{problem}

Below we first show a 2-approximation for the min-delete-super-stable-matching problem when restricted to the case of one-sided top-truncated 
preferences. Subsequently, we use this result in order to get an $\mathcal{O}(n)$-approximation for our problem. However, before that, we introduce the following terminology 
which will be used throughout in this section.
\begin{itemize}
\item An instance $\mathcal{I}$ of the min-delete-super-stable-matching problem can also be thought of as the set of agents along with their preference lists. Initially 
for every agent this list has all the agents in the other set listed in some order. Now, during the course of our algorithm sometimes we use the operation ``delete$(a,b)$'' 
which removes agent $a$ from $b$'s list and $b$ from $a$'s. After such a deletion (or after a series of such deletions) our instance now refers to the set of agents along 
with their updated lists. 
 \item We say that a matching $\mathcal{M}$ is \textit{internally super-stable} with respect to an instance $\mathcal{I}$ if $\mathcal{M}$ is super-stable with respect to the 
instance that is obtained by only considering the matched agents in $\mathcal{M}$ (i.e., consider $\mathcal{I}$ and remove all the agents who 
are not matched in $\mathcal{M}$ from $\mathcal{I}$).
 \item We say that an instance $\mathcal{I}$ with no ties has an exposed rotation $\rho = (m_1, w_1), \allowbreak  \cdots, (m_r, w_r)$ if, in 
$\mathcal{I}$, $w_i$ is the first agent in $m_i$'s list and $w_{i+1}$ is the second agent in $m_i$'s list (here 
$(i+1)$ is done modulo $r$).
\end{itemize}


\begin{algorithm}[t!]
{\footnotesize
\begin{center}
\noindent\fbox{%
\begin{varwidth}{\dimexpr\linewidth-3\fboxsep-3\fboxrule\relax}
\begin{algorithmic}[1]
  \Procedurename proposeWith$(A, \mathcal{I})$
\State assign each agent $a \in A$ to be free
\While{some $a \in A$ is free} \label{line2}
    \State $b \leftarrow $ first agent on $a$'s list \label{line3} 
    \If{$b$ is already engaged to agent $p$ \&\& $b$ finds $p$ and $a$ incomparable}
      \State delete $(a,b)$       
    \Else 
      \If{$b$ is already engaged to agent $p$}
	\State assign $p$ to be free
      \EndIf
      \State assign $a$ and $b$ to be engaged
      \For{each agent $c$ in $b$'s list such that $a \succ_{b} c$}
	\State delete $(c, b)$      
   \EndFor
   \EndIf \label{line14}
 \EndWhile \label{line15}
 \For{each man $m$}
  \State $w \leftarrow$ first woman on $m$'s list 
  \If{there exists a man $m'$ such that $w$ finds $m$ and $m'$ incomparable}
    \State delete $(m', w)$ \label{line19}
  \EndIf
\EndFor \Comment{{\tiny deletions in this loop only happen once and results in the removal of  all the remaining ties}}
 \State return $\mathcal{I}$ \Comment{{\tiny this returns the updated lists}}

\Main
\Input a one-sided top-truncated instance $\mathcal{I} = (\delta, p_U, p_W)$
\State $\mathcal{I'} \leftarrow \text{proposeWith}(U, \mathcal{I})$ \label{line23}
\State $\mathcal{I'} \leftarrow \text{proposeWith}(W, \mathcal{I'})$ \label{line24}
\While{there exists some exposed rotation $(m_1, w_1), (m_2, w_2), \cdots, (m_r, w_r)$ in $\mathcal{I}'$} \label{mainWhile}
    \State delete $(m_i, w_i)$ for all $i \in \{1, \cdots, r\}$
    \State $\mathcal{I'} \leftarrow \text{proposeWith}(U, \mathcal{I'})$
    \State $\mathcal{I'} \leftarrow \text{proposeWith}(W, \mathcal{I'})$
\EndWhile \label{line29}
\State $\mathcal{M} \leftarrow$ for all men $m \in U$, match $m$ with the only woman in his list
\State construct $G=(V, E)$ where $V = U \cup W$, $(m,w)\in E$ if $(m, w)$ is a super-blocking pair in $\mathcal{M}$ w.r.t.\ $\mathcal{I}$
\State $D \leftarrow$ minimum vertex cover of $G$ \label{lineVC}
\For{each $a \in D$} \label{line33}
  \State $D \leftarrow D \cup \mathcal{M}(a)$ 
\EndFor \label{line35}
\State return $(D, \mathcal{M})$
\end{algorithmic}  
\end{varwidth}
}
\end{center}
}
\caption{For the case of one-sided top-truncated preferences, the set $D$ returned by the algorithm is a 2-approximation for the 
min-delete-super-stable-matching problem and the matching $\mathcal{M}$ returned is an $\mathcal{O}(n)$-approximation for $\delta$-min-bp-super-stable-matching}
\label{algo1}
\end{algorithm}

\begin{proposition} \label{prop:2approx} 
 Algorithm~\ref{algo1} is a polynomial-time 2-approximation algorithm for the min-delete-super-stable-matching problem when restricted to the case of one-sided top-truncated 
preferences.
\end{proposition}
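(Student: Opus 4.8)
The plan is to establish the two halves of a $2$-approximation guarantee for the set $D$ returned by Algorithm~\ref{algo1}: that $D$ is \emph{feasible}, i.e.\ $\mathcal{I}_{-D}$ admits a perfect super-stable matching, and that $|D|\le 2\cdot\mathrm{OPT}$, where $\mathrm{OPT}$ is the cardinality of a smallest feasible deletion set. Before either half I would record the structural properties of the matching $\mathcal{M}$ computed in the main procedure of Algorithm~\ref{algo1} (the two propose-and-reject sweeps followed by the exposed-rotation-elimination loop), which is a restriction of Irving's super-stability machinery to one-sided top-truncated instances: (a) the procedure terminates with every man's list reduced to a single woman and these singletons form a perfect matching $\mathcal{M}$; (b) every $\mathrm{delete}(a,b)$ executed is \emph{safe} --- the pair $(a,b)$ can lie in no super-stable matching of the instance it is deleted from --- from which I would derive both that $\mathcal{M}$ is weakly stable with respect to $\mathcal{I}$ and a \emph{dominance} property of $\mathcal{M}$: for the men, say (establishing this property and the side for which it holds is itself part of the work), every agent on that side weakly prefers its $\mathcal{M}$-partner to its partner in any perfect super-stable matching of any subinstance obtained by deleting agents; and (c) because men submit strict complete orders while only women submit ties, all of them at the bottom, every super-blocking pair $(m,w)$ of $\mathcal{M}$ has the rigid shape $w\succ_m\mathcal{M}(m)$ together with $m$ incomparable to $\mathcal{M}(w)$ in $w$'s order, so that $m$ and $\mathcal{M}(w)$ both lie in $w$'s bottom tie.

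Feasibility of $D$ is then immediate. The super-blocking graph $G$ built by the algorithm is bipartite, all of its edges running between $U$ and $W$; the algorithm takes $D$ to be a minimum vertex cover $VC(G)$ of $G$ and then adjoins the $\mathcal{M}$-partner of every vertex of $VC(G)$. Hence $D$ is a union of whole $\mathcal{M}$-pairs, so $\mathcal{M}$ restricted to the surviving agents $(U\cup W)\setminus D$ is still a perfect matching, and $D$ meets every edge of $G$, so no super-blocking pair of $\mathcal{M}$ survives. Deleting agents changes neither the relevant preference comparisons nor the $\mathcal{M}$-partners of the survivors, so no new super-blocking pair is created; therefore $\mathcal{M}$ restricted to $\mathcal{I}_{-D}$ is a perfect super-stable matching, and $D$ is feasible. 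Moreover $|D|\le 2\,|VC(G)|$.

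For the ratio it thus suffices to prove $\mathrm{OPT}\ge|VC(G)|$, and I would establish the sharper statement that \emph{every} feasible deletion set $D^*$ is a vertex cover of $G$ (whence $|D^*|\ge|VC(G)|$, $VC(G)$ being minimum). Let $\mathcal{M}^*$ be a perfect super-stable matching of $\mathcal{I}_{-D^*}$ and suppose, for contradiction, that a super-blocking pair $(m,w)$ of $\mathcal{M}$ has $m\notin D^*$ and $w\notin D^*$; then $m$ and $w$ are both matched in $\mathcal{M}^*$. Since $(m,w)$ does not super-block $\mathcal{M}^*$, either $\mathcal{M}^*(m)\succ_m w$ --- whence, with $w\succ_m\mathcal{M}(m)$, the man $m$ is strictly better off in $\mathcal{M}^*$ than in $\mathcal{M}$ --- or $w$ strictly prefers $\mathcal{M}^*(w)$ to $m$, and then, since $m$ lies in $w$'s bottom tie while $\mathcal{M}^*(w)$ is strictly above it (hence in $w$'s linearly ordered part), the woman $w$ is strictly better off in $\mathcal{M}^*$. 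The possibility in which the dominance side is made uniformly strictly better off is killed outright by property~(b). For the other possibility I would trace the chain of forced improvements: the beneficiary is matched in $\mathcal{M}^*$ to an agent who, by dominance, is strictly worse off in $\mathcal{M}^*$, and super-stability of $\mathcal{M}^*$ at the pair formed by that agent and his or her $\mathcal{M}$-partner then forces the next agent to be strictly better off --- unless that agent lies in $D^*$. Iterating gives an alternating chain in which each agent is strictly improving or strictly worsening from $\mathcal{M}$ to $\mathcal{M}^*$; as soon as the chain repeats an agent it closes into a cycle, so it either closes up --- yielding a cyclically-shifted family of $\mathcal{M}$-pairs that I would rule out because it amounts to an exposed rotation the procedure has already eliminated (the reduced instance at termination is rotation-free) --- or it reaches an agent of $D^*$, a possibility that must be excluded for the particular chain started at $(m,w)$.

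I expect this last point --- ruling out that a super-blocking pair of $\mathcal{M}$ whose two endpoints both survive $D^*$ is ``rescued'' by $\mathcal{M}^*$ through rematching rather than through deleting one of its endpoints --- to be the main obstacle, and to require a sharp characterisation of the weakly-stable matching that the propose-and-rotation-elimination procedure outputs on a one-sided top-truncated instance (that it is the canonical rotation-free one, hence extremal for the non-dominance side among the relevant matchings), so that the contradiction in the harder case can be read off from that extremality instead of from an open-ended chase. Everything else is routine: the two propose sweeps run in polynomial time, an instance has only polynomially many exposed rotations so the elimination loop terminates in polynomial time, and a minimum vertex cover of the bipartite graph $G$ is computable in polynomial time via maximum bipartite matching and König's theorem.
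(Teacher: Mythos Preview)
Your approach diverges from the paper's at the crucial lower-bound step, and the divergence is exactly where your acknowledged gap sits. You try to show that \emph{every} feasible deletion set $D^*$ is itself a vertex cover of $G$, via a dominance property of $\mathcal{M}$ (one side weakly prefers its $\mathcal{M}$-partner to its partner in any super-stable matching of any subinstance) followed by a chain-chasing argument. Neither piece is secured. The dominance claim, as you state it, is suspect: deleting a man's competitors can only help him, so a subinstance can easily give a surviving man a strictly better partner than $\mathcal{M}$ does. And you yourself flag the chain argument as the ``main obstacle'' without resolving it. The difficulty is real: you are trying to compare $\mathcal{M}$ against an \emph{arbitrary} super-stable matching $\mathcal{M}^*$ of an \emph{arbitrary} subinstance $\mathcal{I}_{-D^*}$, and there is no obvious structural relationship between the two.

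The paper avoids this comparison entirely by proving a different invariant. Rather than dominance, Lemmas~\ref{lem:proposeWith} and~\ref{lem:rotate} show that every \texttt{proposeWith} call and every rotation elimination \emph{preserves the maximum size of a matching that is internally super-stable with respect to the original instance}~$\mathcal{I}$. Combined with Lemma~\ref{lem:one-to-one} (after the loop every list has size one), the reduced instance $\mathcal{I}'$ still contains some matching of size $k=n-|D_{opt}|/2$ that is internally super-stable with respect to~$\mathcal{I}$, and---because lists are now singletons---any such matching must be a sub-matching of $\mathcal{M}$. Its set of unmatched agents, of cardinality $2(n-k)=|D_{opt}|$, then meets every edge of $G$ (a surviving super-blocking pair would contradict internal super-stability), so the minimum vertex cover $C$ satisfies $|C|\le|D_{opt}|$, whence $|D|\le 2|C|\le 2|D_{opt}|$. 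The point is that the paper only ever compares $\mathcal{M}$ to matchings that live inside the final reduced instance $\mathcal{I}'$, where they are \emph{forced} to be sub-matchings of $\mathcal{M}$; the size-preservation lemmas are what transport the optimum into $\mathcal{I}'$.

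Your feasibility argument and the polynomial-time analysis are fine and agree with the paper's.
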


\begin{proof}
 The main idea for Algorithm~\ref{algo1} is inspired by the work of \citet{tan90} who looked at the problem of finding the maximum internally stable matching for the stable 
roommates problem (which is equivalent to the problem of finding the minimum number of agents to delete so that the rest of the agents will have a stable matching when the 
instance is just restricted to themselves). Informally, at a very high level, the key idea in \citeauthor{tan90}'s algorithm was to show that some of the entries in each 
agent's list can be deleted by running the proposal-rejection sequence like in Gale-Shapley algorithm and through rotation eliminations, while at the same time maintaining at 
least one solution of the maximum size. As we will see below, this is essentially what we do here as well, adapting this idea as necessary for our case when 
there are ties on one side but only at the bottom.

Before we go on to the main lemmas, let us suppose that $\mathcal{I} = (\delta, p_U, p_W)$ is an arbitrary instance of the min-delete-super-stable-matching problem when 
restricted to the case of one-sided top-truncated preferences, where $\delta$ is the amount of missing information, $p_U, p_W$ are the preference orders submitted by the men 
and women, respectively, and $|U|=|W|=n$. Also, let $D_{opt}$ be the optimal solution for this instance. This in turn implies that we can form a perfect and internally 
super-stable matching of size $k = n - \frac{D_{opt}}{2}$, and that in fact $k$ is the maximum size of any such matching (as otherwise $D_{opt}$ cannot be optimal). Next, for 
now, let us assume the correctness of the following lemmas (note that all the instances we talk about in this section are restricted to the case of one-sided top-truncated 
preferences). We will prove them later, in Sections~\ref{sec:prof12}, \ref{sec:prof13}, and~\ref{sec:prof14}.

\begin{lemma} \label{lem:proposeWith}
Let $\mathcal{I}_1$ denote some instance and $\mathcal{I}_2$ denote the instance returned by the procedure proposeWith$(A, \mathcal{I}_1$), where the set $A$ 
represents the proposing side. If there exists a matching of size $t$ in $\mathcal{I}_1$ that is internally super-stable with respect to $\mathcal{I}$, then there exists a 
matching of size $t$ in $\mathcal{I}_2$ that is internally super-stable with respect to $\mathcal{I}$.     
\end{lemma}

\begin{lemma} \label{lem:rotate}
 Let $\mathcal{I}_1$ denote some instance that does not contain any ties, $(m_1, w_1), (m_2, w_2), \allowbreak \cdots, \allowbreak (m_r, w_r)$ be a rotation that is exposed 
in 
$\mathcal{I}_1$, and $\mathcal{I}_2$ be the instance that is obtained by deleting the entries $(m_i, w_i)$ for all $i \in \{1, \cdots, r\}$ from $\mathcal{I}_1$. If there 
exists a matching of size $t$ in $\mathcal{I}_1$ that is internally super-stable with respect to $\mathcal{I}$, then there exists a matching of size $t$ in $\mathcal{I}_2$ 
that is internally super-stable with respect to $\mathcal{I}$. 
\end{lemma}

\begin{lemma} \label{lem:one-to-one}
 If $\mathcal{I}_1$ is an instance that does not contain any exposed rotation, then the list of every man in $\mathcal{I}_1$ has only one woman and vice versa. 
\end{lemma}

Given the above lemmas and given the fact that the instance $\mathcal{I}$ has an internally super-stable matching of size $k = n - \frac{D_{opt}}{2}$, we can 
start with the instance $\mathcal{I}$ and repeatedly apply Lemma~\ref{lem:proposeWith} and see that the instance $\hat{\mathcal{I}}$ that we obtain 
after line~\ref{line24} of Algorithm~\ref{algo1} has a matching of size $k$ that is internally super-stable matching with respect to $\mathcal{I}$. Next, starting with the 
instance $\hat{\mathcal{I}}$, we can again repeatedly apply 
Lemma~\ref{lem:rotate} and see that the instance $\mathcal{I}'$ that remains after 
line~\ref{line29} of Algorithm~\ref{algo1} also has a matching of size $k$ that is internally super-stable matching with respect to $\mathcal{I}$. 
Additionally, from Lemma~\ref{lem:one-to-one} we know that in this instance each man has only one woman in his list and vice versa. So, now, consider the matching 
$\mathcal{M}$ that can be obtained by matching each man with the only woman in his list. Next, consider the bipartite graph $G=(V,E)$ where $V = U \cup W$ and $(m, w) \in E$ 
if $(m, w)$ is a super-blocking pair with respect to the original instance $\mathcal{I}$, and consider a minimum vertex cover $C$ of $G$. We show below that $k \leq n - 
\frac{|C|}{2}$. 

Suppose this is false and that $\mathcal{I}$ has an internally super-stable matching of size greater than $n-\frac{|C|}{2}$. Now, from the discussion above, we know that this 
implies 
$\mathcal{I}'$ also has a matching of size greater than $n-\frac{|C|}{2}$ such that it is internally super-stable with respect to $\mathcal{I}$. This in turn implies that we 
can remove 
less than $|C|$ agents and have a matching that is internally super-stable with respect to $\mathcal{I}$. That is, we can remove less than $|C|$ agents and also at the same 
time ensure that for every $(m, w) \in E$, this matching has only one of $m$ or $w$ matched in it, for if otherwise it will not be internally super-stable with respect to 
$\mathcal{I}$. However, this implies that $|C|$ is not the size of the minimum vertex cover of $G$, and hence we have a contradiction.

Now, to get our approximation bound, consider the set $D$ that is returned by the algorithm. We know that $D \leq 2 |C| \leq 4(n-k) = 2D_{opt}$, where the first 
inequality arises because of lines~\ref{line33}-\ref{line35} in Algorithm~\ref{algo1} and the second inequality 
uses the observation above that $k \leq n - \frac{|C|}{2}$. Finally, it is easy to see that all the steps  can be done in polynomial-time (it is well-known through the 
K\H onig's Theorem that one can find a minimum vertex cover of a bipartite graph in polynomial-time). 
\end{proof}

Given Proposition~\ref{prop:2approx}, we can now prove the following theorem.

\begin{theorem} \label{thm:napprox} 
 For any $\delta > 0$, Algorithm~\ref{algo1} is a polynomial-time $\mathcal{O}(n)$-approximation algorithm for the $\delta$-min-bp-super-stable-matching problem when 
restricted to the case of one-sided top-truncated preferences. Moreover, the $\mathcal{O}(n)$-approximate matching that is returned is also weakly-stable.  
\end{theorem}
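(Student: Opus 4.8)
The plan is to feed the output $(D,\mathcal{M})$ of Algorithm~\ref{algo1} through Proposition~\ref{prop:2approx} and then establish two things: (i) $\mathcal{M}$ is weakly-stable with respect to $\mathcal{I}$, and (ii) $\abs{\text{super-bp}(\mathcal{M})}$ is at most $\mathcal{O}(n)$ times $k^\ast := \abs{\text{super-bp}(\mathcal{M}^{SS}_{opt})}$, the optimum of $\delta$-min-bp-super-stable-matching. Since we restrict to instances without a super-stable matching we have $k^\ast \geq 1$, so dividing by $k^\ast$ is legitimate and (ii) really does give an $\mathcal{O}(n)$ ratio.

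For (ii), first I would relate $D_{opt}$ (the optimum of min-delete-super-stable-matching on $\mathcal{I}$) to $k^\ast$. Take an optimal matching $\mathcal{M}^\ast$ for $\delta$-min-bp-super-stable-matching and view its super-blocking pairs as the edges of a graph on $U \cup W$; this graph has $k^\ast$ edges, hence a vertex cover $C^\ast$ with $\abs{C^\ast} \leq k^\ast$. Setting $D^\ast = C^\ast \cup \{\mathcal{M}^\ast(a) : a \in C^\ast\}$ gives a set of size at most $2k^\ast$ that is closed under taking $\mathcal{M}^\ast$-partners, so $\mathcal{M}^\ast$ restricted to $(U \cup W)\setminus D^\ast$ is a perfect matching of $\mathcal{I}_{-D^\ast}$; and since whether a pair super-blocks depends only on the two agents and their (unchanged) partners, the fact that $C^\ast$ covers all super-blocking pairs shows this restricted matching has none. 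Hence $D_{opt} \leq 2k^\ast$, and by Proposition~\ref{prop:2approx}, $\abs{D} \leq 2D_{opt} \leq 4k^\ast$. Now, by lines~\ref{lineVC}--\ref{line35}, $D$ contains a vertex cover of the graph $G$ whose edges are exactly the super-blocking pairs of $\mathcal{M}$ with respect to $\mathcal{I}$; so every super-blocking pair of $\mathcal{M}$ has an endpoint in $D$, and since any single agent lies in at most $n$ super-blocking pairs, $\abs{\text{super-bp}(\mathcal{M})} \leq n\abs{D} \leq 4nk^\ast$. This gives the $\mathcal{O}(n)$-approximation, and polynomiality is already in Proposition~\ref{prop:2approx}.

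For (i), I would use Lemma~\ref{lem:one-to-one}: after line~\ref{line29} each man $m$'s list is exactly $\{\mathcal{M}(m)\}$, so if some obvious blocking pair $(m,w)$ with $w \succ_m \mathcal{M}(m)$ existed, the entry $(m,w)$ must have been deleted during the run. The plan is to carry a Gale--Shapley-style invariant: whenever the algorithm executes delete$(a,b)$, agent $b$ is at that moment holding (as a fianc\'e, or as the unique surviving top of its list) some agent that $b$ weakly prefers, in $\mathcal{I}$, to $a$, and $b$'s held agent only ever improves afterwards. Checking this for the three deletion rules inside proposeWith is direct (one deletes $(c,b)$ only when $b$ just got engaged to $a \succ_b c$, or when $b$ is engaged to someone $b$ finds incomparable to the proposer, and the final loop deletes $(m',b)$ only when $b$'s fianc\'e is incomparable to $m'$), and the deletions performed during a rotation elimination are controlled by the rotation structure together with Lemma~\ref{lem:rotate}: after eliminating $(m_1,w_1),\dots,(m_r,w_r)$ and re-running proposeWith, each $m_i$ merely drops $w_i$ in favour of $w_{i+1}$, which is weakly worse for $m_i$, while $w_i$ retains a weakly-better man. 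Granting the invariant, the deletion of $(m,w)$ forces $\mathcal{M}(w) \succeq_w m$, so $(m,w)$ cannot be an obvious blocking pair; hence $\mathcal{M}$ is weakly-stable. I expect the bookkeeping in this last part to be the main obstacle — in particular, confirming that the invariant survives the alternation of proposeWith$(U,\cdot)$, proposeWith$(W,\cdot)$, and the rotation loop, and that the tie-removal loop at the end of proposeWith never introduces an obvious blocking pair; the approximation bound in (ii) is essentially the routine counting argument above.
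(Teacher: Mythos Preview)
Your argument for (ii) is correct and is essentially the paper's proof, phrased through vertex covers rather than the decomposition $\mathcal{M}=\mathcal{M}_1\cup\mathcal{M}_2$. The paper observes that men outside $D$ can only super-block with women inside $D$ (this is exactly your ``$D$ contains a vertex cover'' remark), and bounds the total by $n\,|D|\le 2n\,|D_{opt}|$; it then uses $|D_{opt}|\le 2\,|\text{super-bp}(\mathcal{M}_{opt})|$, which is your $D_{opt}\le 2k^\ast$ stated contrapositively. The constants and the logic match.

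For (i) your plan is correct in spirit but considerably heavier than what the paper does. The paper dispatches weak stability in one line: after the first call proposeWith$(U,\mathcal{I})$ in line~\ref{line23}, the matching ``each man with the top woman on his current list'' is already weakly-stable (this is the Gale--Shapley invariant), and every later step---proposeWith$(W,\cdot)$, rotation eliminations, and the subsequent proposeWith calls---only deletes list entries, so the final $\mathcal{M}$ still assigns every man his current top woman; one then checks (as you also sketch) that none of these deletions can create an obvious blocking pair. In particular proposeWith$(W,\cdot)$ truncates men's lists only from below, leaving each man's top unchanged, and a rotation step replaces $(m_i,w_i)$ by $(m_i,w_{i+1})$, where $w_i$'s new partner $m_{i-1}$ is weakly preferred to $m_i$ in $\mathcal{I}$. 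This global ``top-of-list'' viewpoint avoids the bookkeeping you were worried about; your per-deletion invariant would also work but, as you note, the monotonicity fails on the men's side during rotations, so you would have to argue separately on the women's side at those moments---exactly the case split the paper's formulation sidesteps.
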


\begin{proof}
Consider an arbitrary instance $\mathcal{I}$ of the $\delta$-min-bp-super-stable-matching problem when restricted to the case of 
one-sided top-truncated preferences. Let $\mathcal{M}_{opt}$ be the optimal solution associated with 
$\mathcal{I}$. Next, consider the same instance $\mathcal{I}$ for the min-delete-super-stable-matching problem and let us consider the matching  $\mathcal{M}$ that is 
returned 
by Algorithm~\ref{algo1} for this instance. Also, let $D_{opt}$ be the optimal solution of the min-delete-super-stable-matching problem for the instance $\mathcal{I}$ and $D$ 
be the set that is returned by Algorithm~\ref{algo1}. (We can assume throughout that $D_{opt} \geq 1$, for if otherwise this implies that it has a super-stable matching, 
and as mentioned in Section~\ref{sec:prob} we do not consider such instances.) First, it is easy to see that this matching is 
weakly-stable (because in every instance that results after the first proposal-rejection sequence (which is in line 23), a matching that is formed by matching each man with 
the first woman on his list will be weakly-stable). Second, note that we can rewrite $\mathcal{M}$ as $\mathcal{M} = \mathcal{M}_1 \cup \mathcal{M}_2$, where  $\mathcal{M}_1 
= \mathcal{M} \setminus \{ \cup_{a \in D} (a, \mathcal{M}(a))\}$ and $\mathcal{M}_2 = \cup_{a \in D} (a, \mathcal{M}(a))$. Now, if $S_1$ ($S_2$) denotes the number of 
super-blocking pairs associated with men in $\mathcal{M}_1$ ($\mathcal{M}_2$), then
\begin{align} \label{app:eq1}
 \abs{\text{super-bp}(\mathcal{M})} &= S_1 + S_2 \nonumber \\
  &\leq \left(n-\frac{|D|}{2}\right)\cdot\frac{|D|}{2} + \frac{|D|}{2}\cdot n \nonumber\\
 &\leq n\cdot{|D|}  \nonumber \\
 &\leq2n\cdot|D_{opt}|,
 \end{align}
 where the second step is using the fact that the men in $\mathcal{M}_1$ can form at most $\frac{|D|}{2}$ super-blocking pairs with women outside of $\mathcal{M}_1$  and the 
men in $\mathcal{M}_2$ can in the worst case form a blocking pair with all the women, and 
the last step is using the fact that $D_{opt} \geq 1$ and that $|D| \leq 2 \cdot |D_{opt}|$, which we know is true by Proposition~\ref{prop:2approx}.
 
 Also, if $\mathcal{M}_{opt}$ is the optimal solution for the $\delta$-min-bp-super-stable-matching problem, then we know that  
 \begin{align}\label{app:eq2}
  \abs{\text{super-bp}(\mathcal{M}_{opt})} \geq \frac{|D_{opt}|}{2},
 \end{align}
 as otherwise one can delete all the men who are involved in super-blocking pairs in $\mathcal{M}_{opt}$ and their corresponding partners and get a 
super-stable matching on the remaining agents.
 
 Finally, using Equations~\ref{app:eq1} and~\ref{app:eq2} we have our theorem.
\end{proof} 

\subsubsection{Proof of Lemma~\ref{lem:proposeWith}} \label{sec:prof12}
Let us first prove the following claims. However, before that we introduce the following terminology. When the procedure proposeWith$(A,\mathcal{I}_1)$ is executed, for 
every run of the while loop in line~\ref{line2} with respect to an agent $a$, we can track the instance that is currently being used. That is, initially we have the instance 
$\mathcal{I}_1$ and this is referred to as the instance that is ``currently being used'' by the agent $a_1$ where $a_1$ is the first agent with respect whom the while loop is 
executed. Now, after the first run of the while loop (w.r.t.\ $a_1$) we have an updated instance (because of some delete operations that happened in 
lines~\ref{line3}-\ref{line14}), say, $\mathcal{I}_{curr}$. Therefore, the next time the while loop is run with respect to some agent $a_2$, this 
is the instance that is ``currently being used'' with respect to $a_2$. We use this terminology in the following claim.

\begin{claim} \label{clm:freeagent}
 Let $a \in A$ be an agent who is assigned to be `free' in the procedure proposeWith$(A,\mathcal{I}_1)$, $\mathcal{I}_{curr}$ denote the instance that is currently being used 
with respect to $a$, and $\mathcal{I}'_1$ be the instance obtained by running lines~\ref{line3}-\ref{line14} with respect to agent $a$. If there exists a matching of size 
$t$ in $\mathcal{I}_{curr}$ that is internally super-stable with respect to $\mathcal{I}$, then there exists a matching of size $t$ in $\mathcal{I}'_1$ that is internally 
super-stable with respect to $\mathcal{I}$.   
\end{claim}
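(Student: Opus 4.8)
The plan is to show that any matching $\mathcal{M}$ of size $t$ that is valid in $\mathcal{I}_{curr}$ and internally super-stable with respect to $\mathcal{I}$ can be turned into one with the same two properties that is moreover valid in $\mathcal{I}'_1$. The observation that drives everything is that \emph{all} deletions performed in lines~\ref{line3}--\ref{line14} with respect to $a$ are incident to the single agent $b$ chosen in line~\ref{line3}: in the ``incomparable'' branch the only deletion is $(a,b)$, and in the ``else'' branch the deletions are exactly the pairs $(c,b)$ with $a \succ_b c$. Hence $\mathcal{I}'_1$ differs from $\mathcal{I}_{curr}$ only by the loss of some edges at $b$, so $\mathcal{M}$ can fail to be valid in $\mathcal{I}'_1$ \emph{only} if its edge at $b$, say $(c_0,b)$ with $c_0 = \mathcal{M}(b)$, was deleted. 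If it was not deleted, $\mathcal{M}$ itself works and we are done; so assume it was. In the ``incomparable'' branch this means $c_0 = a$, and $b$ is (in the algorithm) engaged to some $p$ that $b$ finds incomparable to $a$; in the ``else'' branch it means $a \succ_b c_0$ and $a$ is the agent just engaged to $b$.

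Next I would repair $\mathcal{M}$ by rerouting a short chain of agents starting at $b$. In the ``else'' branch the natural first move is to rematch $b$ to $a$. If $a$ is unmatched in $\mathcal{M}$, set $\mathcal{M}' = (\mathcal{M}\setminus\{(c_0,b)\})\cup\{(a,b)\}$; this is valid in $\mathcal{I}'_1$ (the edge $(a,b)$ is not deleted), has size $t$, and one checks it is still internally super-stable with respect to $\mathcal{I}$ because the only new partnership is $(a,b)$, $b$'s partner only improved ($a\succ_b c_0$), and $b$ is the first agent on $a$'s current list — the invariant that a just-engaged receiver sits atop the proposer's current list — which together with the one-sided, bottom-only ties rules out any new super-blocking pair at $a$ or $b$. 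If instead $a$ is matched in $\mathcal{M}$, then since $\mathcal{M}$ is internally super-stable and $(a,b)$ cannot be a super-blocking pair, $a$ strictly prefers $d := \mathcal{M}(a)$ to $b$; as $b$ is first on $a$'s current list, $d$ has been deleted from $a$'s list, which (by inspecting how deletions arise, and using that a receiver never becomes free once engaged and only trades up) forces $d$ to be currently engaged to some $p'$ with $p'\succeq_d a$. One then repeats the argument with $d$ in the role of $b$ and $p'$ in the role of $a$, building a chain $b, d, \dots$ of receivers; along this chain the receiver's $\mathcal{M}$-partner strictly improves from that receiver's point of view, so no receiver recurs, the chain is finite, and it terminates at a receiver whose algorithmic partner is unmatched in $\mathcal{M}$ (or at an outright unmatched proposer), where the reroute closes into a genuine matching $\mathcal{M}'$ of size $t$. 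The ``incomparable'' branch is handled identically, swapping $(a,b)$ out for $(p,b)$ and using that $b$ is indifferent between $a$ and $p$ both in $\mathcal{I}_{curr}$ and in $\mathcal{I}$, since deletions neither create nor destroy incomparabilities.

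The main obstacle, and where essentially all the work sits, is verifying that the rerouted $\mathcal{M}'$ creates no super-blocking pair with respect to $\mathcal{I}$ among its matched agents: $\mathcal{M}'$ must use only edges surviving the pruning while being super-stable against the \emph{full} preferences $\mathcal{I}$, so a would-be super-blocking pair $(x,y)$ of $\mathcal{M}'$ that was not one of $\mathcal{M}$ must involve an agent whose partner changed along the chain, and excluding it requires showing $y$ cannot weakly-prefer $x$ to $\mathcal{M}'(y)$. I expect to argue this from the engagement invariants (an engaged receiver is weakly-best-placed on the relevant proposer's current list; a proposer has been rejected exactly by the strict upper part of its list) together with the fact that, because only one side has ties and those ties are at the bottom, $\succeq$ and $\succ$ on any agent's list behave as a strict order with a single indifference class at the end. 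I would also need to nail down the termination bookkeeping for the chain and the boundary case where line~\ref{line3}'s ``first agent'' is picked from a tied class on a proposing woman's list. With those in hand, $\mathcal{M}'$ has size $t$, is valid in $\mathcal{I}'_1$, and is internally super-stable with respect to $\mathcal{I}$, which is the claim.
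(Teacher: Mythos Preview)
Your opening observation---that all deletions in lines~\ref{line3}--\ref{line14} are incident to the single agent $b$---is exactly right, and your proposed repair when $a$ (resp.\ $p$) is unmatched in $\mathcal{M}$ is precisely the paper's argument. The problem is that you then embark on a chain/rerouting argument for the case ``$a$ is matched in $\mathcal{M}$,'' and this case is \emph{vacuous}. You yourself derive that if $a$ is matched to some $d$ then, by internal super-stability, $a$ must strictly prefer $d$ to $b$ in $\mathcal{I}$, whence $d$ has been deleted from $a$'s list in $\mathcal{I}_{curr}$. But $\mathcal{M}$ is by hypothesis a matching \emph{in} $\mathcal{I}_{curr}$, which exactly means every matched pair survives in the current lists; so $(a,d)\in\mathcal{M}$ forces $d$ to still be on $a$'s list in $\mathcal{I}_{curr}$, a contradiction. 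Put differently: since $b$ is first on $a$'s list in $\mathcal{I}_{curr}$ and the proposing side $A$ carries no ties (the paper notes that all ties are eliminated after the first call to \texttt{proposeWith}, so your ``boundary case'' of a tied first choice does not arise either), any partner $d$ available to $a$ in $\mathcal{I}_{curr}$ satisfies $b\succ_a d$ in $\mathcal{I}$; together with $a\succ_b c_0$ this makes $(a,b)$ an obvious blocking pair of $\mathcal{M}$, contradicting internal super-stability. Hence $a$ is necessarily unmatched, and the single swap $\mathcal{M}' = (\mathcal{M}\setminus\{(c_0,b)\})\cup\{(a,b)\}$ already finishes the ``else'' branch. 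The identical reasoning with $p$ in place of $a$ handles the ``incomparable'' branch.

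So the paper's proof is a one-step swap with no chain at all, and what you flagged as ``the main obstacle, where essentially all the work sits'' dissolves once you use that $\mathcal{M}$ lives in $\mathcal{I}_{curr}$. Your outline is not wrong so much as it is solving a nonexistent subproblem; the missing idea is simply that validity in $\mathcal{I}_{curr}$ already forces the swapping agent to be unmatched.
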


\begin{proof}
First, note that every time the procedure proposeWith() is called with the set $A$, the agents in this set do not have any ties. This is because, when proposeWith() is called 
with the set of women $W$ for the first time in line~\ref{line24}, all ties are already broken due to the first call of proposeWith() with the set of men in 
line~\ref{line23}. Therefore, in all the arguments below we do not need to concern ourselves with issues that can arise as a result of the agents in $A$ having 
ties. 

Now, to prove this claim we consider the following two cases separately.
\begin{enumerate}
 \item when $b$, the first agent in $a$'s list, is either engaged to $p$, but prefers $a$ to $p$, or is not engaged currently
 \item when $b$ is engaged to $p \in A$ and finds $p$ and $a$ incomparable
\end{enumerate}

\item{\textbf{Case 1}:} In this case the only change that happens to the instance $\mathcal{I}_{curr}$ are due to deletions of the form $(c, b)$ where $a \succ_b c$. Let the 
resulting instance be $\mathcal{I}'_1$. So, in order to prove that $\mathcal{I}'_1$ has a matching of size $t$ that is internally super-stable with respect to $\mathcal{I}$, 
we just need to prove that there exists a matching of size $t$ in $\mathcal{I}_{curr}$ that is internally super-stable matching with respect to $\mathcal{I}$ and does not 
have 
any 
matches of the form $(c, b)$. Now, let us suppose that's not the case and that in every matching $\mathcal{M}$ of size $t$ in $\mathcal{I}_{curr}$ that is internally 
super-stable 
with respect to $\mathcal{I}$ there exists such a pair. This in turn implies that $a$ is unmatched in $\mathcal{M}$, for if otherwise $(a, b)$ will form an 
obvious-blocking pair. Hence, we can form the matching $\mathcal{M}' = (\mathcal{M} \setminus (c, b)) \cup (a, b)$. Note that $\mathcal{M}'$ is internally-stable with respect 
to $\mathcal{I}$ as this does not introduce any new super-blocking pair (since $b$ is the first agent on $a$'s list and $b$ does not block any other agent in $\mathcal{M}'$ 
as he prefers the new partner over his old partner $c$) and has the same size as $\mathcal{M}$. This in turn is a contradiction and hence we have that $\mathcal{I}'_1$ has a 
matching of size $t$ that is internally super-stable with respect to $\mathcal{I}$.

\item{\textbf{Case 2}:} In this case, the only change that happens to the instance $\mathcal{I}_{curr}$ is the deletion of $(a, b)$. Let the resulting instance be 
$\mathcal{I}'_1$. So, in order to prove that $\mathcal{I}'_1$ has a matching of size $t$ that is internally super-stable with respect to $\mathcal{I}$, 
we just need to prove that there exists a matching of size $t$ in $\mathcal{I}_{curr}$ that is internally super-stable matching with respect to $\mathcal{I}$ and does not 
have any matches of the form $(a, b)$. Suppose that's not the case and that every matching $\mathcal{M}$ of size $t$ in $\mathcal{I}_{curr}$ that is internally super-stable 
matching with respect to $\mathcal{I}$ has $(a, b)$. This in turn implies that $p$ is unmatched in $\mathcal{M}$, for if otherwise $(p, b)$ will form a 
super-blocking pair (since $b$ is the first agent on $p$'s list and $b$ finds $p$ and $a$ incomparable). Hence, we can now form the matching $\mathcal{M}' = (\mathcal{M} 
\setminus (a, b)) \cup (p, b)$, which is internally super-stable with respect to $\mathcal{I}$ as this does not introduce any new super-blocking pair (since i) $b$ is the 
first agent on $p$'s list and ii) $b$ does not block any other agent in $\mathcal{M}'$ as it finds $a$ and $b$ incomparable and so if he blocks someone now he also blocked 
them when $(a, b)$ was present, thus contradicting the fact that $\mathcal{M}$ was internally super-stable) and has the same size as $\mathcal{M}$. This in turn is a 
contradiction and hence, again, we have that $\mathcal{I}'_1$ has a matching of size $t$ that is internally super-stable with respect to $\mathcal{I}$.
\end{proof}

Next, we prove the following claim whose proof is omitted since it can be proved in a way that is similar to Case 2 in the proof of Claim~\ref{clm:freeagent}.

\begin{claim} \label{clm:tiedel}
 Let $m$ be a man, $w$ be the first woman on $m$'s list, $\mathcal{I}_3$ be some instance obtained after line~\ref{line15}, and $\mathcal{I}_4$ be the instance obtained 
after deleting each $(m', w)$ in line~\ref{line19}. If there exists a matching of size $t$ in $\mathcal{I}_3$ that is internally super-stable with respect to $\mathcal{I}$, 
then there exists a matching of size $t$ in $\mathcal{I}_4$ that is internally super-stable with respect to $\mathcal{I}$.      
\end{claim}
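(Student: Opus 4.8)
The plan is to follow the template of Case~2 in the proof of Claim~\ref{clm:freeagent}, adapted to the single deletion performed in line~\ref{line19}. Since $\mathcal{I}_4$ differs from $\mathcal{I}_3$ only in that the entry $(m',w)$ has been removed, and since internal super-stability is always measured against the fixed original instance $\mathcal{I}$, it suffices to produce a matching of size $t$ that (i) uses only entries present in $\mathcal{I}_3$, (ii) is internally super-stable with respect to $\mathcal{I}$, and (iii) does not contain the pair $(m',w)$; any such matching is simultaneously a matching in $\mathcal{I}_4$ with the same internal super-stability. Applying this single-step statement to each deletion made by the loop following line~\ref{line15} then yields the claim exactly as it is invoked in the proof of Proposition~\ref{prop:2approx}.

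So suppose, for contradiction, that every size-$t$ matching in $\mathcal{I}_3$ that is internally super-stable with respect to $\mathcal{I}$ contains $(m',w)$, and fix such a matching $\mathcal{M}$. First I would argue that $m$ is unmatched in $\mathcal{M}$. Any woman matched to $m$ in $\mathcal{M}$ must lie on $m$'s list in $\mathcal{I}_3$, and $w$ is the first entry of that list; since $m$ is on the side with strict preferences, $m$ strictly prefers $w$ to every other partner he could receive from $\mathcal{I}_3$. Moreover $w$ finds $m$ and $m'=\mathcal{M}(w)$ incomparable, so $m \succeq_w \mathcal{M}(w)$. Hence if $m$ were matched to some $w^\ast \neq w$, then $(m,w)$ would be a super-blocking pair of $\mathcal{M}$ with respect to $\mathcal{I}$, contradicting its internal super-stability; thus $m$ is unmatched in $\mathcal{M}$.

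Now set $\mathcal{M}' = \left(\mathcal{M}\setminus\{(m',w)\}\right)\cup\{(m,w)\}$, which has size $t$, uses only entries of $\mathcal{I}_3$, and avoids $(m',w)$. The last — and only substantive — step is to verify that $\mathcal{M}'$ is internally super-stable with respect to $\mathcal{I}$, which contradicts our assumption and closes the argument. The matched agents of $\mathcal{M}'$ are those of $\mathcal{M}$ with $m'$ replaced by $m$, and the only agents whose partner changed are $w$ (now matched to $m$) and $m$ (now matched to $w$); any super-blocking pair among the unchanged pairs would already super-block $\mathcal{M}$. A pair $(m,\cdot)$ cannot super-block $\mathcal{M}'$, since $w$ is the top of $m$'s surviving list (and, exactly as in Case~2, deleted entries are not treated as new blocks). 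For a pair $(w,m'')$ with $m''\neq m$, use that $m$ and $m'$ lie in a common indifference class of $w$'s strict weak order, so $m''\succeq_w m \iff m''\succeq_w m'$; combined with the fact that $m''$ keeps the same partner in $\mathcal{M}$ and $\mathcal{M}'$, any such $(w,m'')$ that super-blocks $\mathcal{M}'$ with respect to $\mathcal{I}$ would also super-block $\mathcal{M}$ — a contradiction. Hence $\mathcal{M}'$ is internally super-stable with respect to $\mathcal{I}$.

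I expect the main obstacle to be precisely this last verification: ruling out \emph{new} super-blocking pairs created by re-routing $w$ from $m'$ to the tied man $m$. This is where the one-sided top-truncated structure is essential — men having strict preferences forces $m$'s alternatives to sit below $w$ on his surviving list, and the fact that incomparabilities on the women's side form genuine indifference classes is what makes swapping $m$ for $m'$ at $w$ harmless. Everything else is routine bookkeeping that runs in parallel to Case~2 of Claim~\ref{clm:freeagent}.
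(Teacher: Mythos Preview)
Your proposal is correct and takes essentially the same approach as the paper, which omits the proof and simply refers to Case~2 of Claim~\ref{clm:freeagent}. You have correctly identified the role correspondence ($m\leftrightarrow p$, $m'\leftrightarrow a$, $w\leftrightarrow b$) and carried out the swap argument and the two super-blocking checks exactly as in that case; your closing remark also pinpoints the one place where the one-sided top-truncated assumption is used.
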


Given the two lemmas above, we are now ready to prove Lemma~\ref{lem:proposeWith}. First, starting with $\mathcal{I}_1$ and by repeatedly using Claim~\ref{clm:freeagent} with 
respect to each free agent $a \in A$, we can see that the instance $\mathcal{I}'_1$ that we obtain at the end of the while loop (i.e., line~\ref{line15}) has a matching of 
size $t$ that is internally super-stable matching  with respect to the initial instance $\mathcal{I}$. Second, starting with $\mathcal{I}'_1$ and by repeatedly 
using Claim~\ref{clm:tiedel} with respect to each man, we can see that the instance $\mathcal{I}_2$ that is returned from the procedure proposeWith$(A, \mathcal{I}_1)$ has a 
matching of size $t$ that is internally super-stable with respect to the instance $\mathcal{I}$. This in turn proves our lemma. \qed

\subsubsection{Proof of Lemma~\ref{lem:rotate}} \label{sec:prof13}

Let $\mathcal{M}$ be a matching of size $t$ in $\mathcal{I}_1$ that is internally super-stable with respect to $\mathcal{I}$. We will consider the following two cases 
separately and show that in each case there exists a matching of size $t$ in $\mathcal{I}_2$ that is internally super-stable with respect to $\mathcal{I}$.
\begin{enumerate}
 \item for all $i \in \{1, \cdots, r\}$, $(m_i, w_i) \in \mathcal{M}$
 \item there exists some $j \in \{1, \cdots, r\}$ such that $(m_j, w_j) \notin \mathcal{M}$
\end{enumerate}

{\textbf{Case 1}:} Consider the matching $\mathcal{M}'$ that is obtained by removing $(m_i, w_i)$ and instead adding $(m_i, w_{i+1})$ for all $i \in \{1, \cdots, r\}$ 
(here $i+1$ is done modulo $r$). Now, it is easy to see that this does not lead to any new internal super-blocking pairs with respect to $\mathcal{I}$. Hence, $\mathcal{M}'$ 
is internally super-stable with respect to $\mathcal{I}$, has the same size as $\mathcal{M}$ (as every agent matched in $\mathcal{M}$ is also matched in 
$\mathcal{M}'$), and all the matched pairs in $\mathcal{M}'$ are in $\mathcal{I}_2$. 


{\textbf{Case 2}:} First, note that if $(m_i, w_i) \notin \mathcal{M}$ for all $i \in \{1, \cdots, r\}$, then we are done as the only deleted entries in $\mathcal{I}_2$ 
are $(m_i, w_i)$ for all $i \in \{1, \cdots, r\}$. So this implies that we can, without loss of generality, consider the smallest $k \in \{1, \cdots, r\}$ such that $(m_j, 
w_j) \notin \mathcal{M}$ for all $j \in \{1, \cdots, k\}$, but $(m_{k+1}, w_{k+1}) \in \mathcal{M}$ (since there is at least one $j$ such that $(m_j, w_j) \notin 
\mathcal{M}$, this can always be done because we can re-index
the rotation so that the first element $(m_1, w_1)$ of the rotation is not in $\mathcal{M}$). Let $\mathcal{I}_2'$ be the instance that is formed by deleting the entries 
$(m_i, w_i)$ for all $i\in\{1, \cdots, k+1\}$. Below, we will show that $\mathcal{I}_2'$ satisfies the conditions of the lemma, i.e., we show that $\mathcal{I}_2'$ has a 
matching of size $t$ that is internally super-stable with respect to $\mathcal{I}$. And so once we have that, we can just repeat this argument until we get to the instance 
$\mathcal{I}_2$. 

To see why $\mathcal{I}_2'$ satisfies the conditions of the lemma, notice that $m_k$ needs to be unmatched in $\mathcal{M}$, for if otherwise then $(m_k, w_{k+1})$ will be 
an obvious blocking-pair. This in turn implies that we can construct another matching $\mathcal{M}'$ such that $\mathcal{M}'= (\mathcal{M} \setminus (m_{k+1}, w_{k+1})) \cup 
(m_{k}, w_{k+1})$, both of them have the same size, and all the matched pairs in $\mathcal{M}'$ are in $\mathcal{I}_2'$. Also, one can see that this does not lead to any new 
super-blocking pairs since i) $w_{k+1}$ improved and so will not be part of any new super-blocking pair and ii) $m_k$ does not form a super-blocking pair as it is matched to 
the agent who is second in his list and $w_k$ who is first in his list, if matched in $\mathcal{M}$, has to be matched to someone better (as $m_k$ is the last agent in 
$w_k$'s list).
\qed

\subsubsection{Proof of Lemma~\ref{lem:one-to-one}} \label{sec:prof14}

First, note that throughout an agent $a$ is in $b$'s list if and only if $b$ is in $a$'s list. Second, we prove the following claim. 

\begin{claim} \label{clm:one-one}
Let $\mathcal{I}_2$ be some initial instance, $\mathcal{I}_3$ be the instance that is obtained after running the procedure proposeWith() with the men's side proposing, and 
$\mathcal{I}_4$ be the instance that is obtained after running the procedure proposeWith() with the women's side proposing. For any two agents $a$ and $b$ in 
$\mathcal{I}_4$, if $a$ is the only agent in $b$'s list, then $b$ is the only agent in $a$'s list. 
\end{claim}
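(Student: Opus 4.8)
The plan is to recognise $\mathcal{I}_4$ as the classical ``GS-list'' reduction of $\mathcal{I}_2$ and then to read the statement off the structure theory of stable matchings. Two preliminary points put us into that setting. First, each call $\text{proposeWith}(A,\cdot)$ is just the extended Gale--Shapley algorithm with $A$ proposing: the ``delete$(a,b)$'' branch only fires when the proposee is currently indifferent between its partner and the proposer, and the transient (re)engagements that the pseudocode can momentarily create are undone at the next iteration without causing any additional deletions, so the reduced lists returned are exactly those of extended Gale--Shapley. Second, $\mathcal{I}_3$ is tie-free: by hypothesis only the women have ties and all of them sit at the bottom of their lists, and the trailing for-loop of the first call deletes every residual tie (this is exactly what that loop is for). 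Hence $\mathcal{I}_4=\text{proposeWith}(W,\mathcal{I}_3)$ is obtained from $\mathcal{I}_2$ by running men-proposing and then women-proposing on an ultimately tie-free instance, i.e.\ $\mathcal{I}_4$ consists of the GS-lists of $\mathcal{I}_2$.

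I would then use the standard duality of GS-lists. Writing $M_0$ for the man-optimal and $M_z$ for the woman-optimal stable matching of $\mathcal{I}_2$, the claim I want in $\mathcal{I}_4$ is: for a man $m$, the first woman on his list is $M_0(m)$ and the last is $M_z(m)$; symmetrically, for a woman $w$, the last man on her list is $M_0(w)$ and the first is $M_z(w)$. Equivalently, a pair $(m,w)$ surviving in $\mathcal{I}_4$ lies in $M_0$ iff $w$ heads $m$'s list iff $m$ ends $w$'s list, and it lies in $M_z$ iff $w$ ends $m$'s list iff $m$ heads $w$'s list. The implications that come for free from the final (women-proposing) round are the ones of the form ``the $M_z$-partner of a woman is first on her list and she is last on his list''; the work --- and what I expect to be the main obstacle --- is to show that the first entries of the men's lists and the last entries of the women's lists established by the men-proposing round \emph{survive} the women-proposing round (and that the two rounds match the same set of agents, so that these extremal entries exist). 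The key observation is that during $\text{proposeWith}(W,\mathcal{I}_3)$ a man only ever deletes, from his list, women he ranks strictly below his current fiancée; since his fiancée only improves and always lies on his $\mathcal{I}_3$-list, it is weakly below the first woman $w$ on his $\mathcal{I}_3$-list, so $w$ is never deleted and his $\mathcal{I}_4$-list is a (nonempty) prefix of his $\mathcal{I}_3$-list; dually, $m$ is never deleted from $w$'s list (that would force $m$ to have acquired a fiancée he prefers to $w$), so $m$ remains last on $w$'s only-shrinking list. Combined with the fact that the set of matched agents is the same in every stable matching, this yields both directions of the duality above.

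Finally, the claim itself is a two-line consequence. Since the instance is bipartite, one of $a,b$ is a man and the other a woman; say $b$ is a woman and $a$ a man (the other case is identical after relabelling). If $a$ is the only agent on $b$'s list, then $a$ is simultaneously the first and the last agent on $b$'s list, so by the duality the pair $(a,b)$ lies in $M_z$ (from ``$a$ first on $b$'s list'') and in $M_0$ (from ``$a$ last on $b$'s list''). But $(a,b)\in M_0$ forces $b$ to be first on $a$'s list, and $(a,b)\in M_z$ forces $b$ to be last on $a$'s list; hence $b$ is both first and last on $a$'s list, i.e.\ $a$'s list equals $\{b\}$, which is what we wanted to prove.
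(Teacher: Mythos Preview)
Your argument is correct, but it takes a genuinely different route from the paper's.

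The paper gives an elementary, self-contained proof by direct case analysis on the mechanics of the procedure. It distinguishes the two cases (a woman with a singleton list versus a man with a singleton list) already in $\mathcal{I}_3$, and for each case argues directly from the proposal--rejection dynamics: in the first case the singleton woman must be first on her man's list, so her proposal in the second phase clears his list; in the second case a surviving extra man on the woman's list would force some woman to be unengaged after the second phase, contradicting the no-obvious-blocking-pair property of the Gale--Shapley output.

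You instead identify $\mathcal{I}_4$ with the classical GS-lists of the (tie-free) reduced instance and invoke the standard structure theorem: in the GS-lists, the first and last entries on each agent's list are precisely the $M_0$- and $M_z$-partners. The claim then falls out in two lines from the duality $M_0(m)=w \Leftrightarrow M_0(w)=m$ (and likewise for $M_z$). Your sketch of why the extremal entries survive the second phase is essentially the standard proof of that structure theorem, so you are not really avoiding work, just relocating it to a known lemma.

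What each approach buys: yours is cleaner once one is willing to import the GS-list machinery (Gusfield--Irving), and it makes transparent \emph{why} the symmetry holds---it is the $M_0$/$M_z$ duality. The paper's argument is lower-tech and entirely internal to the algorithm, which fits its goal of keeping the section self-contained; it also sidesteps having to argue carefully that proposeWith on a possibly already-reduced instance reproduces the textbook GS-lists (you do address this, but it is the part of your write-up that would need the most care to make fully rigorous, particularly the identification of proposeWith with extended Gale--Shapley and the appeal to the Rural Hospitals--type fact that both phases match the same agents).
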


\begin{proof}
To prove our claim let us consider the following two cases in the instance $\mathcal{I}_3$. 
\begin{enumerate}
 \item there exists a woman $w_1$ such that $m_1$ is the only man on $w_1$'s list, but $m_1$ has at least one other woman other than $w_1$ in his list
 \item there exists a man $m_1$ such that $w_1$ is the only woman on $m_1$'s list, but $w_1$ has at least one other man other than $m_1$ in her list
\end{enumerate}

\item{\textbf{Case 1}:} Since we are looking at  $\mathcal{I}_3$, note that $w_1$ must be the first woman on $m_1$'s list, for if otherwise there is some other man, say, 
$m_k$ who is not engaged (because $w_1$ has only $m_1$ in her list). However, we know that this is not possible and so $w_1$ is the first woman on $m_1$'s list. 
Therefore, now, when we run the procedure proposeWith() with the women's side proposing, $w_1$ will propose to $m_1$ and as a result $m_1$ will delete all other women from 
his list. And so, in $\mathcal{I}_4$, $w_1$ is the only agent in $m_1$'s list and vice versa. 

\item{\textbf{Case 2}:} For this case, let us suppose that even after running the procedure proposeWith() with the women's side proposing, $w_1$ still has at least one other 
man other than $m_1$ in her list (if not, then we are already done). Let this man be $m_2$. This in turn implies that in the engagement relation that results, at least one of 
the women, say, $w_k$, is not engaged (because $m_1$ has only $w_1$ in his list). However, we know that this is not possible as this would imply that the engagement 
relation has an obvious blocking pair $(m_k, w_k)$, where $m_k$ is the last agent on $w_k$'s list (and we know this cannot happen since we are just running the same 
proposal-rejection sequence as in the Gale-Shapley algorithm).
\end{proof}

Given the observations above, let us assume for the sake of contradiction that there exists an instance $\mathcal{I}_1$ such that it does not have any exposed rotation 
but has at least one agent who has a list of size greater than one. Using Claim~\ref{clm:one-one} and the fact that an agent $a$ is in $b$'s list if and only if $b$ is 
in $a$'s list, we can assume without loss of generality that there exists a man $m_1$ such that he has at least two women in his list. Let $w_1$ and $w_2$ be the first and 
second woman, respectively, in $m_1$'s list. Since $\mathcal{I}_1$ is obtained after running the procedure proposeWith() twice, once each with the men's and women's side 
proposing, from Claim~\ref{clm:one-one} we know that $m_2$, who is the last man on $w_2$'s list, too has at least two women in his list as otherwise $w_2$ would also have 
just one man $m_2$ in her list. Let $w_3$ be the second woman in $m_2$'s list. Now, we can see that we can just inductively keep on applying the above argument and form the 
following pairs $\rho = (m_1, w_1), (m_2, w_2), (m_3, w_3), \cdots, (m_r, w_r)$ for some $r \in \{2, \cdots, n\}$, where $m_i$ is the last man on $w_i$'s list and $w_{i+1}$ 
is the second woman on $m_i$'s list. However, note that $\rho$ is an exposed rotation, and hence we have a contradiction. \qed

Before we end this section, we address one final question as to whether, for the class of one-sided top-truncated preferences, one can obtain a 
better approximation result if one continues to consider only weakly-stable matchings. In the theorem below we show that for $\delta \in 
\Omega(\frac{1}{n})$ Algorithm~\ref{algo1} is asymptotically the best one can do under this restriction. 

\begin{theorem} \label{thm:tight}
For $\delta \leq \frac{1}{2}$, if there exists an $\alpha$-approximation algorithm for $\delta$-min-bp-super-stable-matching that always 
returns a matching that is weakly-stable for the case of one-sided top-truncated preferences, then $\alpha \in \Omega\left(\min\left\{n^{\frac{3}{2}}\sqrt{\delta}, 
n\right\}\right)$. 
\end{theorem}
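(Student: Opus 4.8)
The plan is to prove this lower bound by exhibiting, for each relevant value of $\delta \le \tfrac12$, a one-sided top-truncated instance $\mathcal{I}$ on which the optimum of $\delta$-min-bp-super-stable-matching is $\mathcal{O}(1)$ while every weakly-stable matching has $\Omega(\min\{n^{3/2}\sqrt{\delta},n\})$ super-blocking pairs; any $\alpha$-approximation algorithm that always returns a weakly-stable matching then has $\alpha=\Omega(\min\{n^{3/2}\sqrt{\delta},n\})$ on $\mathcal{I}$. I would build $\mathcal{I}$ as a ``cascading gadget'' in the same spirit as the instance of Figure~\ref{fig1} used for Theorem~\ref{thm:ws-tight}, but modified so that all ties sit at the \emph{bottom} of the women's lists while the men keep strict orders. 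As there, $\mathcal{I}$ is organized into $z$ blocks of size roughly $y$, and I would tune $y$, $z$ and the tie lengths so that (i) the total missing information obeys $\sum_i\binom{\ell_i}{2}\le\delta\cdot 2n\binom{n}{2}$ (so $\delta' \le \delta$), where $\ell_i$ is the tie length of the $i$-th woman, and (ii) insisting on no obvious blocking pair triggers a domino effect pushing $\Omega(\min\{n^{3/2}\sqrt{\delta},n\})$ women down into their bottom ties. Non-integrality of $y$, $z$ would be absorbed exactly as in the proof of Theorem~\ref{thm:ws-tight}; the target is that (number of forced women)$\times$(per-woman contribution) is $\Theta(\min\{n^{3/2}\sqrt{\delta},n\})$ while $\sum_i\binom{\ell_i}{2}$ stays $\Theta(\delta n^3)$.

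The analysis of an arbitrary weakly-stable matching rests on the structural fact already used in the proof of Theorem~\ref{thm:WO}: in a weakly-stable matching each super-blocking pair $(m,w)$ has one side finding the other incomparable to its partner. Here only the women have ties and they are all at the bottom, and men have strict preferences, so this forces, for every super-blocking pair $(m,w)$ in a weakly-stable matching, that $w$ is matched within her bottom tie, that $m$ also lies in that tie, and that $m$ strictly prefers $w$ to $\mathcal{M}(m)$. Hence super-blocking pairs can only ``hang off'' women matched into their bottom ties, and each such woman contributes at most $\ell_w-1$ of them. The heart of the argument is to show the gadget forces enough women into their bottom ties in \emph{every} weakly-stable matching: I would trace the proposal--rejection dynamics block by block (a designated man, or a man spilled over from the previous block, consumes one slot, and the absence of obvious blocking pairs propagates the shortage along the block), then count the induced super-blocking pairs. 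Because the ties are now at the bottom rather than the top, the within-block blow-up one can force is weaker than the $\Theta(y^2)$ achieved in Theorem~\ref{thm:ws-tight}, and the best parameter trade-off against the missing-information budget gives $\Omega(n^{3/2}\sqrt{\delta})$ rather than $\Omega(n^2\sqrt{\delta})$; capping at $n$ handles large $\delta$, for which Algorithm~\ref{algo1}'s $\mathcal{O}(n)$ upper bound is already matched (so the $\tfrac12$ ceiling on $\delta$ is harmless).

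For the optimum side, recall that by Theorem~\ref{thm:prob13} the $\delta$-min-bp-super-stable-matching objective is just the number of super-blocking pairs and does \emph{not} forbid obvious blocking pairs. So I would exhibit a single matching $\mathcal{M}_{opt}$ that matches every cascade woman to one of her top (strict) choices and deliberately introduces only a constant number of obvious blocking pairs, precisely enough to ``break'' the cascade, so that $\abs{\text{super-bp}(\mathcal{M}_{opt})}=\mathcal{O}(1)$. Dividing the two bounds yields the claimed $\alpha\in\Omega(\min\{n^{3/2}\sqrt{\delta},n\})$.

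The hard part, as in Theorem~\ref{thm:ws-tight}, is the second step: proving that \emph{every} weakly-stable matching — not just the output of some particular algorithm — inherits the cascade, and doing so while keeping the tie lengths small enough to respect $\sum_i\binom{\ell_i}{2}\le\delta\cdot 2n\binom{n}{2}$. Designing the gadget so that it is simultaneously (a) robust to all weakly-stable matchings, (b) within the $\delta$-budget, and (c) still admits a nearly-super-stable matching after a constant perturbation is the delicate balance; the ties-at-bottom, one-sided restriction is exactly what forces the extra $\sqrt{n}$ loss relative to Theorem~\ref{thm:ws-tight}.
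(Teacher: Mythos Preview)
Your high-level strategy is exactly right and matches the paper: exhibit a one-sided top-truncated instance where the optimum has $\mathcal{O}(1)$ super-blocking pairs while every weakly-stable matching has $\Omega(\min\{n^{3/2}\sqrt{\delta},n\})$. But your proposed construction is much more elaborate than what is needed, and the crucial step---actually building the gadget and proving the cascade fires in \emph{every} weakly-stable matching---is left as a plan rather than carried out. In particular it is not obvious how, with ties forced to the bottom, a multi-block design in the spirit of Figure~\ref{fig1} can push \emph{many} women into their bottom ties in every weakly-stable matching while still admitting an $\mathcal{O}(1)$ optimum; with bottom ties each such woman strictly prefers most men to everyone in her tie, so forcing her there constrains the rest of the instance heavily.

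The paper sidesteps all of this: it does not use blocks at all, but a \emph{single} tie on a \emph{single} woman. Only $w_1$ has missing information---a tie of length $y=\min\{\lfloor 2n^{3/2}\sqrt{\delta}\rfloor,n\}$ at the bottom of her list---so the information budget is just $\binom{y}{2}/\bigl(2n\binom{n}{2}\bigr)\le\delta$. The strict parts are arranged so that $m_i$ and $w_{i+1}$ are mutual top choices for $i=1,\dots,n-y+1$; hence these pairs are forced in every weakly-stable matching, and $w_1$ must be matched to some $m_k$ inside her bottom tie $\{m_{n-y+1},\dots,m_n\}$. Each of the remaining $y-2$ men $m_j$ in that tie then strictly prefers $w_1$ to his partner (their common top choice $w_{n-y+2}$ is already taken), while $w_1$ finds $m_j$ and $m_k$ incomparable, giving $y-2$ super-blocking pairs. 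For the optimum, matching $m_1$ to $w_1$, shifting $m_2,\dots,m_{n-y+1}$ one step along the chain, and sending $m_{n-y+2}$ to $w_2$ leaves exactly one (obvious) blocking pair $(m_1,w_2)$. So a single large tie plus a chain of mutual first choices already yields the bound; the multi-block cascade you envision is unnecessary here.
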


\begin{proof}
 \begin{figure}[t!] 
{\scriptsize
\noindent\hrulefill\\[0.5ex]
\begin{minipage}{0.47\textwidth}
\centering \underline{Men}
\begin{alignat*}{1}
 m_1 &: w_2 \succ w_1 \succ [\cdots]\\
 m_2 &: w_3 \succ w_1 \succ w_2 \succ [ \cdots ]\\
 m_3 &: w_4 \succ w_1 \succ w_3 \succ [ \cdots ]\\
  &\hspace{5mm}\vdots\\[-1ex]
 m_{n-y}&:  w_{n-y+1} \succ [\cdots]\\ 
 m_{n-y+1} &: w_{n-y+2} \succ w_1 \succ w_{n-y+1} \succ [ \cdots ]\\
 m_{n-y+2} &: w_{n-y+2} \succ w_1 \succ w_{2} \succ [ \cdots ]\\
 m_{n-y+3} &: w_{n-y+2} \succ w_1 \succ w_{n-y+3} \succ [ \cdots ]\\
 &\hspace{5mm}\vdots\\[-1ex]
 m_n &: w_{n-y+2} \succ w_1 \succ w_n \succ [ \cdots ]
\end{alignat*} 
\end{minipage}
\vrule{} 
\noindent
\begin{minipage}{0.47\textwidth}
\centering \underline{Women} 
\begin{alignat*}{1}
 w_1&: m_1 \succ \cdots \succ m_{n-y} \succ (m_{n-y+1}, \cdots, m_n)\\
 w_2&: m_1 \succ [\cdots]\\ 
 w_3&: m_2 \succ [\cdots]\\
 w_4&: m_3 \succ [\cdots]\\
  &\hspace{5mm}\vdots\\[-1ex]
 w_{n-y+1}&:m_{n-y} \succ [\cdots]\\
 w_{n-y+2}&:m_{n-y+1} \succ m_{n-y+2} \succ [\cdots]\\ 
 w_{n-y+3}&: m_{n-y+3} \succ [\cdots]\\
 w_{n-y+4}&: m_{n-y+4} \succ [\cdots]\\
 &\hspace{5mm}\vdots\\[-1ex]
 w_n&: m_{n} \succ [\cdots]\\[-1ex]
\end{alignat*}
\end{minipage}
\hrule
\caption{\small The instance $\mathcal{I}$ that is used in the proof of Theorem~\ref{thm:tight}}
\label{fig3}
}
\end{figure} 

To prove this, consider the instance $\mathcal{I}$ defined as in Figure~\ref{fig3}, where $y = \min\left\{\lfloor 2n^{\frac{3}{2}}\sqrt{\delta} \rfloor, \allowbreak 
n\right\}$. Below we 
will show that the statement of the theorem is true when $\delta \leq \frac{1}{2n}$, i.e., show that in this case $\alpha \in 
\Omega\left(n^{\frac{3}{2}}\sqrt{\delta}\right)$. For $\delta > \frac{1}{2n}$ it is then trivial to extend our instance by just having some of the other women (other than 
$w_1$) specify partial preferences.

First, it is easy to verify that this instance has at most $\delta$ amount of information missing. Second, one can see that the optimal solution, i.e., the matching with the 
minimum number of super-blocking pairs, for this instance is 
\begin{multline*}
 \mathcal{M}_{opt} = \big\{(m_1, w_1), \allowbreak (m_2, w_3), \cdots, \allowbreak (m_{n-y+1}, w_{n-y+2}), (m_{n-y+2}, w_{2}),\\ \allowbreak (m_{n-y+3}, w_{n-y+3}), 
\allowbreak \cdots, \allowbreak (m_n, w_n)\big\},
\end{multline*}
where $(m_1, w_2)$ is the only super-blocking pair.

Given the above observation, let us now consider an arbitrary matching $\mathcal{M}$ that is weakly-stable. Since $m_i$ prefers $w_{i+1}$ the most and vice versa for all $i 
\in \{1, \cdots, n-y+1\}$, we know that $\mathcal{M}(m_i) = w_{i+1}$. Hence, $\mathcal{M}(w_1) = m_k$, for some $k \in \{n-y+2, \cdots, n\}$. Additionally, we also know 
that for all $j \in \{n-y+2, \cdots, n\}$ such that $j \neq k$, $w_1 \succ_{m_j} \mathcal{M}(m_j)$, as none of these can men can be matched to $w_{n-y+2}$. This in turn 
implies that since $w_1$ finds $m_k$ and $m_j$ incomparable for $j \in \{n-y+2 \cdots, n\}$ such that $j \neq k$, $(m_j, w_1)$ is a super-blocking pair. Therefore, in 
any weakly-stable matching $\mathcal{M}$ we have $(y-2) \in \mathcal{O}(n^{\frac{3}{2}}\sqrt{\delta})$ super-blocking pairs. 
\end{proof}

\section{Beyond Weak-Stability} \label{sec:beyondws}
In the previous section we investigated weakly-stable matchings and we showed several results concerning this situation. Here we move away from this restriction and explore 
what happens when we do not place any restriction on the matchings. In particular, we begin this section by showing a general hardness of approximation result, 
and then follow it with a discussion on one possible approach that can lead to a near-tight approximation result.

\subsection{\texorpdfstring{Inapproximability result for $\delta$-min-bp-super-stable-matching}{}}
We show a hardness of approximation result for the $\delta$-min-bp-super-stable-matching problem through a gap-producing reduction from the Vertex Cover 
(VC) problem, which is a well-known NP-complete problem \citep{karp72}. In the VC problem, we are given a graph $G=(V,E)$, where $V = \{v_1, \cdots, v_k\}$, and a 
$k_0 \leq k$ and are asked if there exists a subset of the vertices with size less than or equal to $k_0$ such that it contains at least one endpoint of every edge. 
When given an instance $\mathcal{I}$ of VC, the key idea in the proof is to create an instance $\mathcal{I}'$ of $\delta$-min-bp-super-stable-matching such that if 
$\mathcal{I}$ is a ``yes'' instance of VC, then $\mathcal{I}'$ will have a very small number of super-blocking pairs, and if otherwise, then $\mathcal{I}'$ will have a large 
number of super-blocking pairs. 
 
 \begin{theorem} \label{thm:inapprox-minss}
 For any constant $\epsilon \in (0,1]$ and $\delta \in (0, 1)$, one cannot obtain a polynomial-time $(n\sqrt{\delta})^{1-\epsilon}$ approximation algorithm for 
the $\delta$-min-bp-super-stable-matching problem unless P = NP.  
\end{theorem}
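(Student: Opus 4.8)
The plan is to give a gap-producing reduction from Vertex Cover that, for a suitably chosen parameter, simultaneously controls $n$, $\delta$, and the number of super-blocking pairs on the two sides of the gap. Starting from a VC instance $(G=(V,E), k_0)$ with $|V|=k$ and $|E|=m$, I would build an instance $\mathcal{I}'$ of $\delta$-min-bp-super-stable-matching in which there is a core of agents (one pair of agents per vertex, or per edge-incidence) set up so that a vertex cover of size $k_0$ translates into a matching whose only super-blocking pairs correspond to the chosen vertices, while any matching necessarily produces at least one super-blocking pair for every edge not covered by the set of ``chosen'' vertices. The incomparabilities (ties) are placed so that the cascading behaviour from the proof of Theorem~\ref{thm:ws-tight} is reused: around each ``unchosen'' vertex incident to an uncovered edge one forces a whole tie-block of super-blocking pairs. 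Concretely, I would attach to the VC-core a gadget of $t$ padding agents per vertex, all tied together at the relevant position, so that failing to ``cover'' an edge costs $\Omega(t^2)$ super-blocking pairs, whereas a genuine cover of size $k_0$ costs only $O(k_0 \cdot t)$ or even $O(k_0)$ super-blocking pairs depending on how tightly the gadget is tuned.

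The second ingredient is the arithmetic that turns this combinatorial gap into the claimed $(n\sqrt{\delta})^{1-\epsilon}$ inapproximability. I would let $n = \mathrm{poly}(k) \cdot t$ and arrange the total missing information so that $\delta = \Theta(t^2 / n^2)$ (each tie-block of length $t$ contributes $\binom{t}{2}$ missing comparisons, and there are $O(\mathrm{poly}(k))$ such blocks among $2n$ agents), hence $n\sqrt{\delta} = \Theta(t)$ up to polynomial-in-$k$ factors. Picking $t$ as a sufficiently large polynomial in $k$ (large enough to dominate those $\mathrm{poly}(k)$ factors), the ``yes'' case has optimum $O(\mathrm{poly}(k))$ and the ``no'' case has optimum $\Omega(t^2)$, so the ratio is $\Omega(t^{2-o(1)}) = \Omega\big((n\sqrt{\delta})^{2-o(1)}\big)$; since we can also rescale by adding dummy agents to make $\delta$ take any prescribed value in $(0,1)$ and $\epsilon$ any constant in $(0,1]$, an $(n\sqrt{\delta})^{1-\epsilon}$-approximation would distinguish the two cases and hence decide VC. The decisive point is that the standard non-approximability of VC is not even needed — exact VC is NP-complete, and our gadget blows any additive difference of one vertex up to a multiplicative gap of $\Omega(t^2)$.

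The main obstacle I anticipate is designing the gadget so that \emph{no} matching, not merely no weakly-stable matching, can evade the penalty: unlike in Section~\ref{sec:ws} we are no longer allowed to forbid obvious blocking pairs, so the adversary may simply match an agent to a low-ranked partner to kill a super-blocking pair at the cost of creating others. The fix is to make the core ``vertex-selection'' pairs rigid — each such pair $(m_v, w_v)$ should be mutual top choices so that every matching must include it, and the penalty structure should be symmetric enough that redirecting an agent inside a tie-block only relocates, never reduces, the $\Omega(t^2)$ super-blocking pairs; this is exactly the rigidity exploited in the Theorem~\ref{thm:ws-tight} construction, but here I have to re-verify it without the weak-stability assumption. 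A secondary nuisance is bookkeeping: ensuring $\delta' \le \delta$ exactly (since $\delta$ is part of the input and we want it to be an arbitrary fixed constant), which I would handle by the same padding trick used in the remark of Section~\ref{sec:missinfo} and in the proof of Theorem~\ref{thm:tight}, namely adding strict-order dummy agents (contributing no missing information) or extra small ties to fine-tune $\delta$ without affecting the optimum on either side of the gap.
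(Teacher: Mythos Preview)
Your high-level plan---a gap-producing reduction from Vertex Cover---matches the paper's. The gap is in the gadget, and the obstacle you yourself flag is exactly where it breaks.

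You propose to recycle the tie-block cascade from Theorem~\ref{thm:ws-tight} to generate an $\Omega(t^2)$ penalty per uncovered edge. But that construction does \emph{not} have the property that every matching on a block incurs $\Omega(t^2)$ super-blocking pairs: the identity matching $(m_i,w_i)$ on each block incurs none inside the block, and the global optimum in that proof has exactly one super-blocking pair. The $\Omega(t^2)$ count there arises only because weak stability forbids the identity matching. Your proposed fix---making vertex-selection pairs mutual top choices so that ``every matching must include it''---does not force anything once obvious blocking pairs are permitted; violating a mutual-top-choice pair costs a single super-blocking pair, which the adversary happily pays. So in the ``no'' case the adversary can still assemble a matching with $O(\mathrm{poly}(k))$ super-blocking pairs, and your gap collapses.

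The paper's gadget, following \citet{hamada16}, is structured quite differently and works against \emph{arbitrary} matchings. For each edge $(v_i,v_j)$ it builds a long cycle of $S^{i,j}/T^{i,j}$ agents admitting exactly two bad-pair-free perfect matchings, one committing $w_i$ and one committing $w_j$ to the ``cover'' side $M_{A_1}$. The penalty for a bad pair is enforced not by a tie cascade but by a separate block of auxiliary women $V^{i,j}_a$ (each with $S^{i,j}_a$ tied in her list) who form $y-1$ super-blocking pairs with any $s^{i,j}_{a,b}$ matched outside his top three, regardless of how anyone else is matched. The penalty is linear in $y$, not quadratic; the yes/no gap is $(y-1)/(2k^2)$ with $y=k^d+1$ and $n\sqrt{\delta}=\Theta(k^{d+O(1)})$, and taking $d=\lceil 8/\epsilon\rceil$ yields the stated $(n\sqrt{\delta})^{1-\epsilon}$ ratio. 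Your claimed $\Omega\big((n\sqrt{\delta})^{2-o(1)}\big)$ gap is correspondingly over-optimistic.
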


\begin{proof}
 The proof here is similar to the one by \citet[Theorem 1]{hamada16}. The main difference is in the construction of the instance, which in our case is more 
involved; once we have that we can essentially use the same proof as the one by \citeauthor{hamada16} 

Given an instance $\mathcal{I} = (G=(V,E), k_0)$ of the VC problem, where $|V| = k$, we construct the following instance $\mathcal{I}'$ of the 
$\delta$-min-bp-super-stable-matching problem, where  
\begin{itemize}
 \item $d = \left\lceil  \frac{8}{\epsilon} \right\rceil$, $y = k^{d} + 1$, $z = \lceil \frac{1}{\sqrt{\delta}} \rceil$
 \item $M_{A_1} = \{m_1, \cdots, m_{k_0}\}$, $M_{A_2} = \{m_{k_0+1}, \cdots, m_{k}\}$, $W_A = \{w_1, \cdots, w_k\}$
 \item for every $i < j$ such that $(v_i, v_j) \in E$ and $c \in \{1, \cdots, z\}$, $S_{c}^{i,j} = \{s^{i,j}_{c, 1}, \cdots s^{i,j}_{c, y}\}$, $T_{c}^{i,j} = \{t^{i,j}_{c, 
1}, \cdots t^{i,j}_{c, y}\}$, $P_{c}^{i,j} = \{p^{i,j}_{c, 1}, \cdots p^{i,j}_{c, y}\}$, and $V_{c}^{i,j} = \{v^{i,j}_{c, 1}, \cdots v^{i,j}_{c, y}\}$  
 \item $S^{i,j} = \{S_1^{i,j}, \cdots S_z^{i,j}\}, T^{i,j} = \{T_1^{i,j}, \cdots T_z^{i,j}\}, P^{i,j} = \{P_1^{i,j}, \cdots P_z^{i,j}\}$, and $V^{i,j} = \{V_1^{i,j}, \cdots 
\allowbreak V_z^{i,j}\}$
 \item $M_A = M_{A_1} \cup M_{A_2}$, $S = \bigcup S^{i,j}$, $T = \bigcup T^{i,j}$, $P = \bigcup P^{i,j}$, and $V = \bigcup V^{i,j}$
 \item $U = M_A \cup S \cup P$ and $W = W_A \cup T \cup V$
 \item for each $i, j$ the preference orders of the agents are as given in Figure~\ref{fig2}. For an agent $a$, if $R_S$ appears in its preference list for some $R$ and 
$S$, then this implies that $a$ finds all the agents in $R_S$ as incomparable. Also, $[\cdots]$ denotes that the rest of the agents can be placed in any order.
\end{itemize}

\begin{figure}[t!] 
{\scriptsize
\noindent\hrulefill\\[0.5ex]
\begin{minipage}{0.45\textwidth}
\vspace{-2mm}
\centering \underline{Men}
\begin{alignat*}{1}
 m_i &: W_A \succ V^{i, j}_1 \succ [ \cdots ]\\
 s^{i,j}_{1,1} &:\red{t^{i,j}_{1, 1}} \succ w_i \succ \blue{t^{i,j}_{\frac{z}{2}+1,1}} \succ V^{i,j}_1 \succ [ \cdots ]\\
 s^{i,j}_{1,2} &:\red{t^{i,j}_{1, 2}} \succ w_i \succ \blue{t^{i,j}_{1,3}} \succ V^{i,j}_1 \succ [ \cdots ]\\[-0.5ex]
 &\vdots \\[-0.5ex]
 s^{i,j}_{1,y} &:\red{t^{i,j}_{1, y}} \succ w_i \succ \blue{t^{i,j}_{2,1}} \succ V^{i,j}_1 \succ [ \cdots ]\\
 s^{i,j}_{2,1} &:\red{t^{i,j}_{2, 1}} \succ w_i \succ \blue{t^{i,j}_{2,2}} \succ V^{i,j}_2 \succ [ \cdots ]\\[-0.5ex]
 & \vdots\\[-0.5ex] 
 s^{i,j}_{2,y} &:\red{t^{i,j}_{2, y}} \succ w_i \succ \blue{t_{^{i,j}_{3,1}}} \succ V^{i,j}_2 \succ [ \cdots ]\\[-0.5ex]
 &\vdots \\[-1ex]
  &\vdots \\[-1ex]
 s^{i,j}_{\frac{z}{2},y} &:\red{t^{i,j}_{\frac{z}{2},y}} \succ w_i \succ\blue{t^{i,j}_{1,1}} \succ V^{i,j}_{\frac{z}{2}} \succ [ \cdots ]\\
 s^{i,j}_{\frac{z}{2}+1,1} &: \blue{t^{i,j}_{1,2}} \succ w_j \succ \red{t^{i,j}_{\frac{z}{2}+1,2}} \succ V^{i,j}_{\frac{z}{2}+1} \succ [ \cdots ]\\
 s^{i,j}_{\frac{z}{2}+1,2} &: \blue{t^{i,j}_{\frac{z}{2}+1,2}} \succ w_j \succ\red{t^{i,j}_{\frac{z}{2}+1,3}} \succ V^{i,j}_{\frac{z}{2}+1} \succ [ \cdots ]\\[-0.5ex]
 &\vdots\\[-0.5ex]
 s^{i,j}_{\frac{z}{2}+1,y} &: \blue{t^{i,j}_{\frac{z}{2}+1,y}} \succ w_j \succ\red{t^{i,j}_{\frac{z}{2}+2,1}} \succ V^{i,j}_{\frac{z}{2}+1} \succ [ \cdots ]\\[-0.5ex]
 &\vdots \\[-1ex]
  &\vdots \\[-1ex]
 s^{i,j}_{z,y-1} &: \blue{t^{i,j}_{z,y-1}} \succ w_j \succ\red{t^{i,j}_{z,y}} \succ V^{i,j}_{z} \succ [ \cdots ]\\
 s^{i,j}_{z,y} &: \blue{t^{i,j}_{z,y}} \succ w_j \succ\red{t^{i,j}_{\frac{z}{2}+1,1}} \succ V^{i,j}_{z} \succ [ \cdots ]\\[-0.5ex]
 p^{i,j}_{1,1} &: v^{i,j}_{1,1} \succ [ \cdots ]\\
 &\vdots\\[-0.5ex]
  &\vdots\\[-0.5ex]
 p^{i,j}_{z,y} &: v^{i,j}_{z,y} \succ [ \cdots ]
\end{alignat*} 
\end{minipage}
 \vrule{} 
\begin{minipage}{0.45\textwidth}
\centering \underline{Women}
\begin{alignat*}{1}
 w_i &: M_{A_1} \succ S \succ M_{A_2}\\
 t^{i,j}_{1,1} &: M_A \succ s^{i,j}_{\frac{z}{2}, y} \succ S_1^{i,j} \succ [ \cdots ]\\
 t^{i,j}_{1,2} &: M_A \succ s^{i,j}_{\frac{z}{2}+1, 1} \succ S_1^{i,j} \succ [ \cdots ]\\
 t^{i,j}_{1,3} &: M_A \succ s^{i,j}_{1, 2} \succ S_1^{i,j} \succ [ \cdots ]\\[-0.5ex]
 &\vdots \\[-0.5ex]
 t^{i,j}_{1,y} &: M_A \succ s^{i,j}_{1, y-1} \succ S_1^{i,j} \succ [ \cdots ]\\
 t^{i,j}_{2,1} &: M_A \succ s^{i,j}_{1, y} \succ S_2^{i,j} \succ [ \cdots ]\\[-0.5ex]
 & \vdots\\[-0.5ex]
 t^{i,j}_{2,y} &: M_A \succ s^{i,j}_{2, y-1} \succ S_2^{i,j} \succ [ \cdots ]\\[-0.5ex]
 &\vdots \\[-0.5ex]
 t^{i,j}_{\frac{z}{2}+1,1} &: M_A \succ s^{i,j}_{1, 1} \succ S_{\frac{z}{2}+1}^{i,j} \succ [ \cdots ]\\
  t^{i,j}_{\frac{z}{2}+1,2} &: M_A \succ s^{i,j}_{\frac{z}{2}+1, 1} \succ S_{\frac{z}{2}+1}^{i,j} \succ [ \cdots ]\\[-0.5ex]
 &\vdots \\[-1ex]
  &\vdots \\[-1ex]
 t^{i,j}_{z,y} &: M_A \succ s^{i,j}_{z, y-1} \succ S_z^{i,j} \succ [ \cdots ]\\
 v^{i,j}_{1,1} &: M_A \succ S_1^{i,j} \succ p^{i,j}_{1,1} \succ [ \cdots ]\\[-0.5ex]
 &\vdots \\[-0.5ex]
 v^{i,j}_{1,y} &: M_A \succ S_1^{i,j} \succ p^{i,j}_{1,y} \succ [ \cdots ]\\
 v^{i,j}_{2,1} &: M_A \succ S_2^{i,j} \succ p^{i,j}_{2,1}\succ [ \cdots ]\\[-0.5ex]
 & \vdots\\[-0.5ex] 
 v^{i,j}_{2,y} &: M_A \succ S_2^{i,j} \succ p^{i,j}_{2,y} \succ [ \cdots ]\\[-0.5ex]
 &\vdots \\[-1ex]
  &\vdots \\[-1ex]
 v^{i,j}_{z,y} &: M_A \succ S_z^{i,j} \succ p^{i,j}_{z,y} \succ [ \cdots ]
\end{alignat*}
\end{minipage}
\noindent
\hrule
\caption{\small The instance $\mathcal{I'}$ that is used in the proof of Theorem~\ref{thm:inapprox-minss}}
\label{fig2}
}
\end{figure} 

Note that $n = |U| = |W| = k + 2yz|E|$. Also, the amount of missing information per agent is at most $\frac{\binom{k}{2} + \binom{y}{2}}{{\binom{n}{2}}} \leq 
\frac{\binom{k+y}{2}}{{\binom{n}{2}}}$. Therefore, the total amount of missing information is $ \leq \frac{\binom{k+y}{2}}{{\binom{n}{2}}} \leq 
\frac{\binom{2y}{2}}{{\binom{n}{2}}} \leq \frac{4y^2}{n^2} \leq \frac{4y^2}{4y^2z^2} \leq \delta$.
Next, in order to show the correctness, we prove the following claims. Again, as noted above, the proofs of these use 
similar ideas as in Hamada et.\ al's proof of Theorem 1 in \citep{hamada16}. Also, throughout, we make use of the following definition: for every $m_i$, if $m_i$ is not 
matched to a woman in $W_A$, then we call such a pair as a \textit{bad pair}. Additionally, for every $s^{i,j}_{a,b}$, if $s^{i,j}_{a,b}$ is matched to a woman who is 
outside of the top three women in his list (i.e., for instance, if $s^{i,j}_{1,1}$ is matched with anyone other than $t_{1,1}^{i,j}$, $w_i$, or $t_{\frac{z}{2}+1,1}^{i,j}$), 
then 
we again call such pairs as being \textit{bad}. 

\begin{claim} \label{clm:probhited-pair}
 If a matching $\mathcal{M}$ contains a bad pair, then it has at least $y-1$ super-blocking pairs. 
\end{claim}

\begin{proof}
 Consider the case when $m_i$ is matched to a woman $w'$ who is not in $W_A$. This implies that it at least forms a super-blocking pair with all $w \in V_1^{i,j}$ such 
that 
$w \neq w'$. And since every women in $V_1^{i,j}$ finds all the men in $M_A$ as incomparable, therefore we at least have $|V_1^{i,j}| - 1 = y -1$ super-blocking 
pairs. 

Next, consider the case when there is a man $s^{i,j}_{a,b}$ who is matched to a woman $w'$ who is outside of the top three women in his list.  This implies that it at 
least 
forms a super-blocking pair with all $w \in V_a^{i,j}$ such that $w \neq w'$. And since we can assume that no women in $V_a^{i,j}$ is matched to a man in $M_A$ (as this 
would anyway result in $y-1$ super-blocking pairs as proved above) and since all of them find the men in $S^{i,j}_a$ as incomparable, this implies that we have at 
least $|V_a^{i,j}| - 1 = y -1$ super-blocking pairs. 
\end{proof}

Before we go on to the next claim, for every $i < j$ such that $(v_i, v_j) \in E$, consider the sets $S^{i,j}$ and $T^{i,j}$ and let us the define the following two perfect 
matchings, $\mathcal{M}^{i,j}_1$ and $\mathcal{M}^{i,j}_2$, between $S^{i,j}$ and $T^{i,j}$. The matching $\mathcal{M}^{i,j}_1$ ($\mathcal{M}^{i,j}_2$ ) can be inferred 
from Figure~\ref{fig2} by matching every man in $S^{i,j}$ with the woman coloured red (blue) in his list. 
\begin{align*}
\mathcal{M}^{i,j}_1 &= \left\{(s_{1,1}^{i,j}, \red{t^{i,j}_{1,1}}), (s_{1,2}^{i,j}, \red{t^{i,j}_{1,2}}), \cdots, (s_{\frac{z}{2},y}^{i,j}, \red{t^{i,j}_{\frac{z}{2},y}}), 
(s_{\frac{z}{2} + 1, 1}^{i,j}, \red{t^{i,j}_{\frac{z}{2}+1,2}}), (s_{\frac{z}{2} + 1, 2}^{i,j}, \red{t^{i,j}_{\frac{z}{2}+1,3}}), \cdots, \allowbreak (s_{z,y}^{i,j}, 
\allowbreak \red{t^{i,j}_{\frac{z}{2}+1,1}}) 
\right\} \\ 
\mathcal{M}^{i,j}_2 &= \left\{ (s_{1,1}^{i,j}, \blue{t^{i,j}_{\frac{z}{2}+1,1}}), (s_{1,2}^{i,j}, \blue{t^{i,j}_{1,3}}), \cdots, (s_{\frac{z}{2},y}^{i,j}, 
\blue{t^{i,j}_{1,1}}), (s_{\frac{z}{2} + 1, 1}^{i,j}, \blue{t^{i,j}_{1,2}}), (s_{\frac{z}{2} + 1, 2}^{i,j}, \blue{t^{i,j}_{\frac{z}{2}+1,2}}), \cdots, \allowbreak 
(s_{z,y}^{i,j}, \allowbreak \blue{t^{i,j}_{z,y}}) \right\} 
\end{align*}

\begin{claim} \label{clm:2matchings}
 For every $i < j$ such that $(v_i, v_j) \in E$, $\mathcal{M}^{i,j}_1$ and $\mathcal{M}^{i,j}_2$ are the only perfect matchings between $S^{i,j}$ and $T^{i,j}$ that do not 
include a bad pair. Moreover, both $\mathcal{M}^{i,j}_1$ and $\mathcal{M}^{i,j}_2$ have only one super-blocking pair $(m, w)$ such that $m\in S^{i,j}$ and $w\in T^{i,j}$. 
\end{claim}

\begin{proof}[Proof (sketch)]
 It is easy to observe the first part. As for the second part, note that none of $s_{1,1}^{i,j}, \cdots, s_{\frac{z}{2}, y}^{i,j}$ form any super-blocking pair in 
$\mathcal{M}_1$ as they are matched to their 
topmost choices. Also, none of $s_{\frac{z}{2}+1,2}^{i,j}, \cdots, s_{z, y}^{i,j}$ form any super-blocking pairs since the only woman they can form super-blocking pairs 
with, 
which are $t_{\frac{z}{2}+1,2}^{i,j}, \cdots t_{z, y}^{i,j}$ respectively, strictly prefers their currently matched partner, which are $s_{\frac{z}{2}+1,2}^{i,j}, 
\allowbreak \cdots, s_{z, y}^{i,j}$ respectively. Hence, the only super-blocking pair is $(s_{\frac{z}{2}+1,1}^{i,j}, t_{1, 2}^{i,j})$.  We can make similar arguments 
with 
respect to $\mathcal{M}_2$ to show that 
 $(s_{1,1}^{i,j}, t_{1, 1}^{i,j})$ is the only super-blocking pair. 
\end{proof}

Given the two claims above, we can now prove the correctness of the reduction through the following lemmas.  

\begin{lemma} \label{lem:yes}
 If $\mathcal{I} = (G, k_0)$ is a ``yes'' instance of VC, then $\mathcal{I}'$ has a solution with at most $2k^2$ super-blocking pairs. 
\end{lemma}

\begin{proof}
 Let the vertex cover of $G$ be $C$ and since it is a ``yes'' instance, we know that $|C| \leq k_0$. If the size of $C$ is strictly less than $k_0$, then add arbitrary 
vertices to it in order to make its size $k_0$. So from now on we can assume that $|C| = k_0$. Next, construct the following matching $\mathcal{M}$ for the instance 
$\mathcal{I}'$.
\begin{itemize}
 \item For every woman $w_i \in W_A$,  if $v_i \in C$, then match $w_i$ with some man in $M_{A_1}$. Otherwise, match $w_i$ with some man in $M_{A_2}$. 
 \item For every $i < j$ such that $(v_i, v_j) \in E$, if $v_i \in C$, then match every man in $S^{i, j}$ with a woman in $T^{i,j}$ using $\mathcal{M}^{i,j}_2$ as defined 
above. Otherwise, match every man in $S^{i, j}$ with a woman in $T^{i,j}$ using $\mathcal{M}^{i,j}_1$ as defined above.  
\item For every $i < j$ such that $(v_i, v_j) \in E$, match $p^{i,j}_{a,b}$ with $v^{i,j}_{a,b}$. 
\end{itemize}

Now, we know that the each man in $M_A$ can form at most $k$ super-blocking pairs (one with each woman in $W_A$). Additionally, we know from Claim~\ref{clm:2matchings} 
that 
both $\mathcal{M}^{i,j}_1$ and $\mathcal{M}^{i,j}_2$ have at most one super-blocking pair, and that none of the men in $P$ form any super-blocking pair as they all get 
their 
topmost choice. Hence, the total number of blocking pairs is at most $k^2 + |E| \leq 2k^2$.  
\end{proof}

\begin{lemma} \label{lem:no}
 If $\mathcal{I} = (G, k_0)$ is a ``no'' instance of VC, then every matching for $\mathcal{I}'$ has at least $y - 1$ super-blocking pairs. 
\end{lemma}

\begin{proof}
Here we will show that if there exists a matching $\mathcal{M}$ with less than $y-1$ super-blocking pairs for $\mathcal{I}'$, then $\mathcal{I}$ has  a 
vertex cover of size at most $k_0$. To see this, consider $\mathcal{M}$. Since it has less than $y-1$ blocking pairs, we know from Claim~\ref{clm:probhited-pair} that it 
does 
not have any bad pair. This in turn implies that all the men in $M_A$ are matched with a woman in $W_A$ (since all men in $M_A$ have to be attached to a woman in $W_A$ 
and size of both the sets are equal). 

Next, for every $i < j$ such that $(v_i, v_j) \in E$, let us consider the men and women in $S^{i,j}$ and $T^{i,j}$. Since, again, we cannot have any bad pairs, we 
know 
that there has to be a perfect matching between these two sets. Additionally, from Claim~\ref{clm:2matchings} we know that $\mathcal{M}^{i,j}_1$ and $\mathcal{M}^{i,j}_2$ 
are 
the only two perfect matchings that have no bad pairs. Now, for an $(i, j)$, if we were using $\mathcal{M}^{i,j}_1$, then it is easy to see that $w_j$ should 
be matched with a man in $M_{A_1}$ as otherwise she would form a super-blocking pair with all the men in $\{s^{i,j}_{\frac{z}{2}, 1}, \cdots, s^{i,j}_{z,y}\}$, thus 
resulting 
in at least $zy > y-1$ super-blocking pairs for $\mathcal{M}$. Similarly, if we were using $\mathcal{M}^{i,j}_2$, then $w_i$ should be matched with a man in $M_{A_1}$ as 
otherwise we would have at least $zy > y-1$ blocking pairs. Therefore, we have that for each edge $(v_i, v_j) \in E$ at least one of the women $w_i$ or $w_j$ should be 
matched 
to a man in $M_{A_1}$. So, now, if we define $C = \{v_i \: | \: \mathcal{M}(w_i) \in M_{A_1}\}$, then we have a vertex cover of size at most $k_0$ (as size of 
$M_{A_1}$ is $k_0$).
\end{proof}

Finally, from Lemmas~\ref{lem:yes} and~\ref{lem:no}, we have an inapproximability gap of $\alpha$, where
\begin{align*}
 \alpha &\geq \frac{y-1}{2k^2} \\
 &= \frac{k^d}{2k^2} \\
 &> \frac{n\sqrt{\delta}}{16k^4} \tag{\text{using the fact that $n = 2yzk^2 + k \leq 8k^{d+2}\frac{1}{\sqrt{\delta}}$}}\\
 &> \left(n\sqrt{\delta}\right)^{1 - \epsilon} \tag{\text{using the fact that $n = 2yzk^2 + k > 2yz > 2k^d\frac{1}{\sqrt{\delta}}$}}
\end{align*}
\end{proof}

\subsection{\texorpdfstring{A possible general approach for obtaining a near-tight approximation factor for $\delta$-min-bp-super-stable-matching}{}} \label{sec:approach}

While obtaining a general near-tight approximation result for the $\delta$-min-bp-super-stable-matching problem is still open, in this section we propose a potentially 
promising direction for this problem. In particular, we demonstrate how solving even a very relaxed version of the min-delete-stable-matching problem will be enough to get an 
$\mathcal{O}(n)$-approximation for $\delta$-min-bp-super-stable-matching in general. Below, we first define the relaxation in question, which we refer to as an $(\alpha, 
\beta)$-approximation to the min-delete-super-stable-matching problem. 

\begin{definition}[$(\alpha, \beta)$-min-delete-super-stable-matching] Given an instance $\mathcal{I} = (\delta, p_U, p_W)$, compute a set $D'$ such that $|D'| \leq \alpha 
\cdot |D_{opt}|$, where $|D_{opt}|$ 
is the size of the optimal solution to the min-delete-super-stable-matching for the same instance, and the instance $\mathcal{I}_{-D'} = (\delta_{-D'}, p_{U\setminus 
D'}, p_{W\setminus D'})$, where $\delta_{-D'} = \frac{1}{|(U \cup W) \setminus D'|}\sum_{i \in (U \cup W) \setminus D'} \delta_i$, has a matching with at most $\beta$ 
super-blocking pairs. 
\end{definition}

Next, we show that an $(\alpha, \beta)$-approximation to the min-delete-super-stable-matching problem gives us an $(\alpha n + \beta)$-approximation for 
$\delta$-min-bp-super-stable-matching. So, in particular, if we have an $(\alpha, \beta)$-approximation where $\alpha$ is a constant and $\beta \in \mathcal{O}(n)$, 
then this in turn gives us an $\mathcal{O}(n)$-approximation for $\delta$-min-bp-super-stable-matching in general. 

\begin{proposition} \label{prop:abmindel}
 If there exists an $(\alpha, \beta)$-approximation algorithm for the min-delete-super-stable-matching problem, then there exists an $(\alpha n + \beta)$-approximation 
algorithm for the $\delta$-min-bp-super-stable-matching problem.
\end{proposition}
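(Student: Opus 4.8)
The plan is to mimic the structure of the proof of Theorem~\ref{thm:napprox}, where we already turned a $2$-approximation to min-delete-super-stable-matching into an $\mathcal{O}(n)$-approximation for $\delta$-min-bp-super-stable-matching, but now starting from the weaker $(\alpha,\beta)$-guarantee. So, given an instance $\mathcal{I}$, I would first run the assumed $(\alpha,\beta)$-approximation algorithm to obtain a set $D'$ with $|D'| \le \alpha \cdot |D_{opt}|$ such that $\mathcal{I}_{-D'}$ admits a matching $\mathcal{M}'$ with at most $\beta$ super-blocking pairs. Since every agent in $(U \cup W) \setminus D'$ is matched by $\mathcal{M}'$ (we can assume the matching restricted to the surviving agents is perfect on that set, adjusting $D'$ by at most a constant factor if necessary), I would then extend $\mathcal{M}'$ to a full matching $\mathcal{M}$ on $U \cup W$ by matching the deleted men in $D' \cap U$ arbitrarily to the deleted women in $D' \cap W$ (here we use $|D' \cap U| = |D' \cap W|$, which again can be arranged by deleting at most one extra agent per pair, i.e.\ the partner of each deleted agent, as in lines~\ref{line33}--\ref{line35} of Algorithm~\ref{algo1}).

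Next I would bound $\abs{\text{super-bp}(\mathcal{M})}$. Write $\mathcal{M} = \mathcal{M}_1 \cup \mathcal{M}_2$ where $\mathcal{M}_1$ consists of the pairs among the surviving agents and $\mathcal{M}_2$ the pairs among the (at most $|D'|$) deleted agents. The super-blocking pairs split into three groups: (i) those entirely within the surviving agents, of which there are at most $\beta$ by hypothesis; (ii) those with at least one endpoint being a deleted agent. Each deleted man in $\mathcal{M}_2$ can form at most $n$ super-blocking pairs, and each deleted woman likewise, but charging each super-blocking pair with a deleted endpoint to that endpoint gives at most $n \cdot |D'|$ such pairs. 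Hence
\begin{align*}
 \abs{\text{super-bp}(\mathcal{M})} \le \beta + n\cdot|D'| \le \beta + \alpha n \cdot |D_{opt}|.
\end{align*}
On the other side, exactly as in the proof of Theorem~\ref{thm:napprox} (Equation~\ref{app:eq2}), the optimal solution $\mathcal{M}_{opt}$ to $\delta$-min-bp-super-stable-matching satisfies $\abs{\text{super-bp}(\mathcal{M}_{opt})} \ge \frac{|D_{opt}|}{2}$, since otherwise deleting all agents involved in super-blocking pairs of $\mathcal{M}_{opt}$ together with their partners yields a super-stable matching on the rest using fewer than $|D_{opt}|$ deletions. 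Also $\abs{\text{super-bp}(\mathcal{M}_{opt})} \ge 1$ since we only consider instances with no super-stable matching, so in particular $|D_{opt}| \ge 1$. Combining, $\abs{\text{super-bp}(\mathcal{M})} \le \beta \cdot |D_{opt}| + 2\alpha n \cdot \abs{\text{super-bp}(\mathcal{M}_{opt})} \le 2(\alpha n + \beta)\abs{\text{super-bp}(\mathcal{M}_{opt})}$, which is an $\mathcal{O}(\alpha n + \beta)$-approximation (and with slightly more care about the constant, an $(\alpha n + \beta)$-approximation, possibly up to small constants depending on how the ``make $D'$ balanced'' step is accounted).

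The main obstacle I anticipate is not the counting but the bookkeeping around extending $\mathcal{M}'$ to a full matching: the $(\alpha,\beta)$-definition only guarantees a matching on the surviving agents, and to produce a matching on all of $U \cup W$ (the objective of $\delta$-min-bp-super-stable-matching is over full matchings) I need $|D' \cap U| = |D' \cap W|$, which requires the standard ``also delete the partner'' trick and a careful argument that this at most doubles $|D'|$ and does not affect the asymptotics. A secondary subtlety is making sure the $\beta$ super-blocking pairs counted within the surviving agents in $\mathcal{I}_{-D'}$ remain super-blocking pairs (and not more) when viewed inside the full instance $\mathcal{I}$ — this is fine because super-blocking is determined pairwise by the two agents' preference orders, which are unchanged by deletion, but it is worth stating explicitly.
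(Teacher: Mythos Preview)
Your proposal is correct and follows essentially the same route as the paper's proof sketch: split the output matching into the part on $(U\cup W)\setminus D'$ (contributing at most $\beta$ internal super-blocking pairs plus at most $\frac{|D'|}{2}$ per surviving man with women in $D'$) and an arbitrary matching on $D'$ (contributing at most $n$ per deleted man), then combine with the lower bound $\abs{\text{super-bp}(\mathcal{M}_{opt})}\ge |D_{opt}|/2$ exactly as in Equation~\ref{app:eq2}. Your accounting is slightly coarser than the paper's (you charge $n\cdot|D'|$ in one shot rather than separating $S_1$ and $S_2$), but this makes no difference to the conclusion, and you correctly flag the balancing of $D'$ and the invariance of super-blocking under deletion, which the paper's sketch leaves implicit.
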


\begin{proof}[Proof (sketch)]
We can proceed to prove this almost exactly as in the proof of Theorem~\ref{thm:napprox}. Here, if $D$ denotes the $(\alpha, \beta)$-approximate solution returned by the 
algorithm, then the only difference is that we define $\mathcal{M}_1$ to be the matching with the set of agents in $(U \cup W) \setminus D$ such that it has at most 
$\beta$ super-blocking pairs (from the definition of the problem we know that such a matching exists) and $\mathcal{M}_2$ to be an arbitrary matching on the set of 
agents in $D$. Once we have this, then we can arrive at the bound by proceeding exactly as in the proof of Theorem~\ref{thm:napprox}, with the only difference being that 
here 
we would use $S_1$, which is the number of super-blocking pairs associated with $\mathcal{M}_1$, to be equal to $\left(n - \frac{|D|}{2}\right)\cdot\frac{|D|}{2} + \beta$. 
\end{proof}

\section{Conclusion} \label{sec:conclude}
In this paper we initiated a study on matching with partial information in order to investigate what makes a matching ``good'' in this context, and to better understand the 
trade-off between the amount of missing information and the quality of different matchings. Towards this end, we introduced a measure for accounting for missing preference 
information in an instance, and argued that a natural definition of a ``good'' matching in this context is one that minimizes the maximum number of blocking pairs 
with respect to all the possible completions. Subsequently, using an equivalent problem 
($\delta$-min-bp-super-stable-matching) we first explored the space of matchings that contained no obvious blocking pairs (i.e., weakly-stable matchings) in order to better 
understand how missing preference information effected/affected the quality, in terms of approximation with respect to the objective of minimizing the number of 
super-blocking pairs. Later on, by expanding the space of matchings we considered (i.e., removing the restriction that matches must be weakly-stable), we asked whether it was 
possible to improve on the approximation factors that were achieved under the restriction to weakly-stable matchings.  

There are a number of interesting directions for future work.  First, while in Section~\ref{sec:approach} we proposed one possible approach that can lead to near-tight 
approximations, there may be other approaches that can prove fruitful. Second, we believe that the min-delete-super-stable-matching problem, and its 
relaxation 
we introduced, are both of independent interest, and so an open question is to see if one can obtain general results on them. In Proposition~\ref{prop:2approx} we 
saw that a 2-approximation was achievable for the case of one-sided top-truncated preferences and hence it would also be interesting to determine if there are other 
interesting classes of preferences for which constant-factor approximations are possible. Finally, there are possible extensions, like, for instance, allowing 
incompleteness---meaning the agents can specify that they are willing to be matched to only a subset of the agents on the other set---that one could consider and ask similar 
questions like the ones we considered.

\printbibliography

\appendix

\section{Example to illustrate ``bad'' weakly-stable matchings in the case of one-sided top-truncated preferences} \label{sec:example}

Consider the instance $\mathcal{I}$ as shown in Figure~\ref{fig4}, where ties appear only on the women's side. Furthermore, we define the following: 

\begin{figure}[t!] 
{\scriptsize
\noindent\hrulefill\\[0.5ex]
\begin{minipage}{0.49\textwidth}
\centering \underline{Men}
\begin{alignat*}{1}
 m_1 &: w_1 \succ W_{F\setminus\{1\}} \succ W_{B_1}\succ \cdots \succ W_{B_z}\\
 m_2 &: w_1 \succ w_2 \succ [ \cdots ]\\
 m_3 &: w_2 \succ w_3 \succ W_{F\setminus\{2, 3\}} \succ W_S\\
 m_4 &: w_2 \succ w_4 \succ W_{F\setminus\{2, 4\}} \succ W_S\\[-1ex]
 &\hspace{20mm}\vdots\\[-1ex]
 m_{\frac{n}{2}} &: w_2 \succ w_{\frac{n}{2}} \succ W_{F\setminus\{2, \frac{n}{2}\}} \succ W_S\\
 m_{b_0} &: w_1 \succ W_{B_1 \setminus \{b_0\}} \succ w_{b_0} \succ W_{S\setminus B_1}  \succ W_{F\setminus\{1\}}\\[-1ex]
 &\hspace{20mm}\vdots\\[-1ex]
 m_{b_1 - 1} &: w_1 \succ W_{B_1 \setminus \{b_1 - 1\}} \succ w_{b_1 - 1} \succ W_{S\setminus B_1}  \succ W_{F\setminus\{1\}}\\
 m_{b_1} &: w_1 \succ W_{B_2 \setminus \{b_1\}} \succ w_{b_1} \succ  W_{S\setminus B_2} \succ W_{F\setminus\{1\}}\\[-1ex]
 &\hspace{20mm}\vdots\\[-1ex]
 m_{b_2 - 1} &: w_1 \succ W_{B_2 \setminus \{b_2 - 1\}} \succ w_{b_2 - 1} \succ  W_{S\setminus B_2} \succ W_{F\setminus\{1\}}\\[-1ex]
 &\hspace{20mm}\vdots\\[-1ex]
   m_{b_{z-1}} &: w_1 \succ W_{B_z \setminus \{b_{z-1}\}}  \succ  w_{b_{z-1}} \succ W_{S\setminus B_z} \succ W_{F\setminus\{1\}}\\[-1ex]
 &\hspace{20mm}\vdots\\[-1ex]
 m_{b_{z} - 1} &: w_1 \succ W_{B_z \setminus \{b_{z} - 1\}} \succ w_{b_{z} - 1} \succ W_{S\setminus B_z} \succ  W_{F\setminus\{1\}}
\end{alignat*} 
\end{minipage}
 \vrule{} 
\begin{minipage}{0.49\textwidth}
\centering \underline{Women}
\begin{alignat*}{1}
 w_1&: m_2 \succ m_1 \succ [ \cdots ]\\
 w_2&: m_2 \succ m_1 \succ [ \cdots ]\\
 w_3&: m_1 \succ m_3 \succ [ \cdots ]\\
 w_4&: m_1 \succ m_4 \succ [ \cdots ]\\[-1ex]
 &\hspace{20mm}\vdots\\[-1ex]
 w_{\frac{n}{2}}&: m_1 \succ m_{\frac{n}{2}} \succ [ \cdots ]\\
 w_{b_0}&: M_{S\setminus B_1} \succ m_1 \succ m_{b_0} \succ M_{F\setminus\{1\}} \succ M^T_{B_1 \setminus \{b_0\}}\\[-1ex]
 &\hspace{20mm}\vdots\\[-1ex]
 w_{b_1 - 1} &:  M_{S\setminus B_1} \succ m_1 \succ m_{b_1-1} \succ M_{F\setminus\{1\}} \succ M^T_{B_1 \setminus \{b_1 - 1\}} \\
 w_{b_1} &:  M_{S\setminus B_1} \succ m_1 \succ m_{b_1} \succ M_{F\setminus\{1\}} \succ M^T_{B_1 \setminus \{b_1\}}\\[-1ex]
 &\hspace{20mm}\vdots\\[-1ex]
 w_{b_2 - 1} &:  M_{S\setminus B_1} \succ m_1 \succ m_{b_2-1} \succ M_{F\setminus\{1\}} \succ M^T_{B_1 \setminus \{b_2-1\}}\\[-1ex]
 &\hspace{20mm}\vdots\\[-1ex]
 w_{b_{z-1}} &:  M_{S\setminus B_1} \succ m_1 \succ m_{b_{z-1}} \succ M_{F\setminus\{1\}} \succ M^T_{B_1 \setminus \{b_{z-1}\}}\\[-1ex]
 &\hspace{20mm}\vdots\\[-1ex]
 w_{b_{z} - 1} &: M_{S\setminus B_1} \succ m_1 \succ m_{b_z-1} \succ M_{F\setminus\{1\}} \succ M^T_{B_1 \setminus \{b_z-1\}}
\end{alignat*}
\end{minipage}
\hrule 
\caption{\small The instance $\mathcal{I}$ that is used to illustrate that there can be weakly-stable matchings with $\mathcal{O}(n^2\sqrt{\delta})$ super-blocking pairs even 
in the case of 
one-sided top-truncated preferences}
\label{fig4}
}
\end{figure}

\begin{itemize}
\item $\delta \in [\frac{16}{n^2}, \frac{1}{4}]$,
  $y = \frac{n\sqrt{\delta}}{2}$,  $z = \frac{n}{2y}$ (for simplicity we assume that $y$ and $z$ are integers; we can appropriately modify the proof if that is not the 
case)
 \item $b_j = \frac{n}{2} + jy + 1, \forall j \in [0, \cdots z]$
 \item$B_i = \{b_{i-1}, \cdots, b_{i} - 1\}, \forall i \in [1, \cdots z]$, 
 $F = \{1, \cdots, \frac{n}{2}\}, S = \{\frac{n}{2} + 1, \cdots, n\}$
   \item$W_X\,(M_X):$ {for some set $X$, place all the women (men) with index in $X$ in the increasing order of their indices}
 \item$W^T_X\,(M^T_X):$ {for some set $X$, place all the women (men) with index in $X$ as tied}
 \item$[ \cdots ]:$ {place all the remaining alternatives in some strict order}.
\end{itemize}

First thing is to see that the optimal solution $\mathcal{M}_{opt}$ associated with the instance is
\begin{equation*}
 \mathcal{M}_{opt} = \left\{(m_1, w_1), (m_2, w_2), \cdots, (m_n, w_n)\right\},
\end{equation*}
where $(m_2, w_1)$ is the only super-blocking pair (and it is an obvious blocking pair). Also, it can be verified that the total amount of missing information in $\mathcal{I}$ 
is at most $\delta$.

Now, consider the matching $\mathcal{M}$, where 
\begin{multline*}
 \mathcal{M} = \Bigg\{(m_1, w_2), (m_2, w_1), (m_3, w_3), (m_4, w_4) \cdots, (m_{\frac{n}{2}}, w_{\frac{n}{2}}), \\ (m_{b_0}, w_{b_0+1}), (m_{b_0+1}, w_{b_0+2}), \cdots, 
(m_{b_1 
- 2}, w_{b_1-1}), (m_{b_1 - 1}, w_{b_0}), \\ (m_{b_1}, w_{b_1+1}), (m_{b_1+1}, w_{b_1+2}), \cdots, (m_{b_2 - 2}, w_{b_2-1}), (m_{b_2 - 1}, w_{b_1}), \cdots,\\ \cdots, 
(m_{b_{z-1}}, w_{b_{z-1}+1}), (m_{b_{z-1}+1}, w_{b_{z-1}+2}), \cdots, (m_{b_{z}-2}, w_{b_{z}-1}), (m_{b_{z}-1}, w_{b_{z-1}})\Bigg\}.
\end{multline*}
It is easy to check that $\mathcal{M}$ is weakly-stable. Also, it can be verified that it has $\mathcal{O}(n^2\sqrt{\delta})$ super-blocking pairs (this is because 
with respect to each block $B_j$ one can see that $\mathcal{M}$ has $\mathcal{O}\left(|B_j|^2\right)$ super-blocking pairs).\qed

\end{document}